\newtheorem{lemma}{Lemma}
\newtheorem{proposition}{Proposition}
\newtheorem{theorem}{Theorem}
\numberwithin{equation}{section}
\numberwithin{lemma}{section}
\numberwithin{proposition}{section}
\numberwithin{figure}{section}
\newcommand{\pl}{\partial}
\newcommand{\ubar}{\overline{u}}
\newcommand{\R}{\mathbb{R}}
\newcommand{\C}{\mathbb{C}}
\newcommand{\beq}{\begin{equation}}
\newcommand{\eeq}{\end{equation}}
\newcommand{\half}{\frac12}
\begin{document}

\title{The Kawahara Equation: Traveling Wave Solutions Joining Periodic Waves }

\author[1]{Patrick Sprenger} 
\author[2]{Thomas J. Bridges}
\author[1]{Michael Shearer} 
\affil[1]{Department of Mathematics, North Carolina State University}
\affil[2]{Department of Mathematics, University of Surrey}




\date{}
\maketitle

\begin{abstract}
The Kawahara equation is a weakly nonlinear long-wave model of dispersive waves that emerges when leading order dispersive effects are in balance with the next order correction. Traveling wave solutions of the Kawahara equation  satisfy a fourth-order ordinary differential equation in which the traveling wave speed is a parameter. The fourth order equation has Hamiltonian structure and admits a two-parameter family of single-phase periodic solutions with varying speed and Hamiltonian. A set of jump conditions is derived for pairs of periodic solutions with equal speed and Hamiltonian. These are necessary conditions for the existence of traveling waves that asymptote to the periodic orbits at $\pm \infty$. Bifurcation theory and parameter continuation are used to construct  multiple solution branches of the jump conditions.  For each  pair of compatible periodic solutions, the heteroclinic orbit representing the traveling wave is constructed  from the intersection of stable and unstable manifolds of the periodic orbits. Each branch terminates at an equilibrium-to-periodic solution in which the equilibrium is the background for a solitary wave that connects to the associated periodic solution.

\end{abstract}

\section{Introduction}

The Kawahara equation
\begin{align}\label{eq:kawahara}
u_t + uu_x + \alpha u_{xxx} + u_{xxxxx} = 0,
\end{align}
is a model for weakly nonlinear, dispersive waves for which the third and fifth order dispersive terms are both significant, their balance indicated by the parameter $\alpha $. Such a balance occurs in a variety of physical contexts. For example, shallow water waves are described by the Kawahara equation when surface tension and gravity effects are comparable, corresponding to Bond numbers near $1/3$ \cite{hunter_existence_1988}.  The equation also arises as a continuum model for chains of coupled oscillators \cite{gorshkov_existence_1979} under specific interaction laws, for  magneto-acoustic waves propagating at a critical angle relative to an applied magnetic field \cite{kakutani_weak_1969,kawahara_oscillatory_1972}, and for nonlinear optical systems \cite{baqer_modulation_2020,el_radiating_2016,wai_nonlinear_1986,webb_generalized_2013}.

Traveling wave solutions (TWs) of the Kawahara equation \eqref{eq:kawahara} satisfy a fifth order ODE that can be integrated once, revealing the Hamiltonian structure of the resulting fourth order equation. The modulations of periodic TW solutions of the PDE \eqref{eq:kawahara}  can be studied using Whitham modulation theory \cite{whitham_linear_1974}, in which a periodic TW is modulated to vary on slow time and space scales. The solutions are described to leading order by a system of first order PDEs in conservative form for variables $(\ubar, a, k)$ representing the space-time variations of  the average, amplitude and wavenumber, respectively,  of the underlying periodic solution. Because the modulation equations constitute a system of conservation laws, they can be studied using the well-developed theory of such equations. A next step is to investigate how shock wave solutions of the modulation equations relate to traveling wave solutions of the Kawahara PDE \eqref{eq:kawahara}. A key element of this approach is to recognize that each constant value of the triple $(\ubar, a, k)$ represents a periodic TW with a specific wave speed. Thus, a jump between constant values of $(\ubar, a, k)$ becomes a necessary condition for the existence of a TW solution of \eqref{eq:kawahara} connecting the two associated periodic solutions, provided their wave speeds are the same.  In the paper of Sprenger and Hoefer \cite{sprenger_discontinuous_2020}, the authors made this connection for  the KdV5 equation (the Kawahara equation \eqref{eq:kawahara} with $\alpha = 0$) and constructed multiple traveling wave solutions of the PDE using numerical computations. 

In this manuscript, a somewhat different approach is taken. We  work directly with the fifth order ODE to derive jump conditions for traveling waves approaching distinct periodic waves in the far-field. We use tools from dynamical systems to construct traveling wave solutions, specifically the characterization of stable and unstable manifolds of periodic solutions of the fourth-order ODE, in which the Hamiltonian structure is of crucial importance. 

\subsection{Preliminaries}
 
\subsubsection{Dispersion relation}
The Kawahara equation, linearized about a background $u=\ubar,$ is the constant coefficient PDE
\begin{equation} 
\label{lin_PDE}
v_t+\ubar v_x+\alpha v_{xxx} +v_{xxxxx}=0.
\end{equation}
It has the linear dispersion relation 
\begin{equation}\label{eq:disp1}
\omega_0=\omega_0(k,\ubar;\alpha)=\ubar k - \alpha k^3 + k^5,
\end{equation}
in which $k$ is the wavenumber of solutions proportional to $v(x,t)=e^{i(kx-\omega_0 t)}.$ Consequently, the group and phase velocities are given respectively by 
\begin{align}
\label{eq:group1}
c_{\rm g}(k,\ubar;\alpha) &= \pl_k \omega_0=\ubar-3\alpha k^2+5k^4,\\ \label{eq:phase1}
c_{\rm p}(k,\ubar;\alpha) &= \frac {\omega_0}{k}=\ubar-\alpha k^2+k^4, 
\end{align}
and are plotted in Figure \ref{fig:vel_plot} for $\alpha \in \{\pm 1,0\}$

\begin{figure}[H]
\begin{center}
  \includegraphics[scale=0.55]{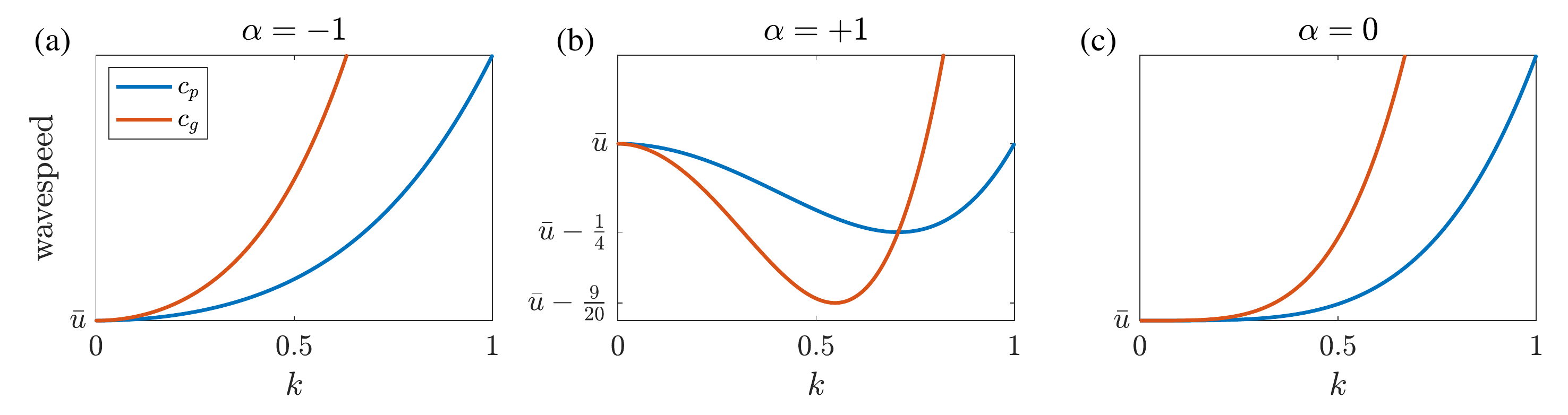}
\caption{Graphs of the phase velocity $c_p$ (blue) and group velocity $c_g$ (red) for (a) $\alpha = -1$ (b) $\alpha = +1$ and (c) $\alpha = 0$.}
\label{fig:vel_plot}
\end{center}
\end{figure}

The parameter $\alpha$ plays a significant role in the properties of the linear, dispersive waves and the nonlinear solutions in what follow. We  distinguish between the three cases $\alpha = 0$, $\alpha > 0$ and $\alpha < 0$. For fixed background mean $\ubar,$ the dispersion relation is a convex function of $k\geq 0$ if $\alpha\leq 0.$ However, for $\alpha>0,$ the dispersion relation is not convex and there are nonzero values of $k$ at which the phase and group velocities each attain a minimum. The minimum values are labeled in Figure~\ref{fig:vel_plot}(b). Moreover, there is a range of phase velocities corresponding to pairs of wavenumbers that satisfy the resonance condition 
 \begin{align}\label{eq:resonance}
     c_{\rm p}(k_1,\ubar)=c_{\rm p}(k_2,\ubar),
 \end{align} 
 with $0<k_1<k_2.$  Specifically, wavenumbers $k_1$ and $k_2$ correspond to the same phase velocity if $k_1^2+k_2^2=\alpha.$ Moreover, the corresponding linear waves are harmonically related if $k_2=nk_1$ for  integer $n\geq 2,$ with $k_1^2=\alpha/(1+n^2).$ 
 
\subsubsection{Scaling Properties} \label{scaling1}
 When $\alpha = 0$, equation Eq. \eqref{eq:kawahara} becomes the fifth order Korteweg-de Vries equation (KdV5), which is invariant under the scaling of variables 
\begin{equation}
 \label{scale1}
 u = b \tilde{u},\quad  t = b^{-5/4} \tilde{t}, \quad x = b^{-1/4} \tilde{x},
\end{equation}
for any $b>0.$  For fixed $\alpha\neq 0,$ the Kawahara equation is not invariant under this scaling. However, scaling the variables as in \eqref{scale1} corresponds to a scaling of $\alpha,$ so that the two cases $\alpha=\pm 1$ represent all non-zero values of $\alpha.$ To see this, we insert the scaled variables into the Kawahara equation \eqref{eq:kawahara},
  \begin{align}
 \tilde{u}_{\tilde{t}} + \tilde{u}\tilde{u}_{\tilde{x}} + \alpha b^{-1/2} \tilde{u}_{\tilde{x}\tilde{x}\tilde{x}} + \tilde{u}_{\tilde{x}\tilde{x}\tilde{x}\tilde{x}\tilde{x}} = 0. 
    \end{align}
  Setting $b = \alpha^2$ we observe the coefficient of the third order dispersive term is $\alpha b^{-1/2}=\pm 1.$ We then recover the Kawahara equation \eqref{eq:kawahara} with $\alpha = \pm 1.$  
  
Equation \eqref{eq:kawahara} is also invariant with respect to the Galilean transformation 
 \begin{align}\label{eq:galilean}
 u(x,t) \to u(x-\ubar t,t) + \ubar.
 \end{align}
Thus, the effect of adding the constant background $\ubar$ is to change the speed of the spatial frame by the same constant.

\subsubsection{Traveling Waves and Hamiltonian Structure}  \label{TW_Hamiltonian}
    Traveling wave solutions of the Kawahara equation \eqref{eq:kawahara} have the form
\begin{equation}\label{tw1}
u(x,t) = f(\xi), \quad \xi = x - ct, 
\end{equation}
where $f$ is a smooth real function defining the profile of the wave and  the speed $c$ is also known as the phase velocity of the TW.  Substituting \eqref{tw1} into \eqref{eq:kawahara}, we obtain the fifth order ordinary differential equation 
\begin{align}\label{eq:profile_eq}
-c f' + ff' + \alpha f'''  + f^{(5)} = 0,  \qquad '=d/d\xi.
\end{align}
Integrating leads to the fourth order equation 
\begin{align}\label{eq:4th_ord_ode}
-c f + \frac{1}{2}f^2 +\alpha f''+  f^{(4)} = A, 
\end{align}
where $A$ is a real constant of integration. Multiplying Eq. \eqref{eq:4th_ord_ode} by $f$ and integrating once more leads to the Hamiltonian
\begin{align}\label{eq:energy_integral}
H=-\frac{c}{2}f^2 + \frac{1}{6}f^3 + \frac{\alpha}{2}(f')^2+f''' f'  - \frac{1}{2}\left(f''\right)^2 - A f, 
\end{align}
which is a second constant of integration, an invariant of solutions. The Hamiltonian structure is revealed by writing 
\beq\label{hamiltonian1}
 H(y,z,p,q)=pq-\frac{\alpha}{2}q^2-\frac12z^2 +\frac16 y^3-\frac{c}{2}y^2 - Ay. 
 \eeq
with 
$$y=f, \ z=f'',\ p=f'''+\alpha f',\ q=f'.$$
Then equation \eqref{eq:energy_integral} can be written as the first order system
\begin{align}\label{hamsys}
\begin{split}
y' &= \frac{\partial H}{\partial p}, \qquad   p' = -\frac{\partial H}{\partial y}\\
z' &=\frac{\partial H}{\partial q}, \qquad  q' =-\frac{\partial H}{\partial z}.
\end{split}
\end{align}
There are two scaling properties of traveling waves that are useful; these are enumerated in the following lemma. 
\begin{lemma}\label{Lemma_1}
\begin{enumerate}
\item  If $u(x,t)=f(x-ct)$ is a traveling wave solution of \eqref{eq:kawahara} with speed $c$, then for any $b\in \R,$ $f(x-\tilde{c}t) +b$ is also a solution, with speed $\tilde{c}=c+b.$
\item If $f(\xi)$ satisfies equation \eqref{eq:4th_ord_ode} with speed $c$ and constant of integration $A,$ then $\tilde{f}(\xi)=f(\xi)-d$ satisfies equation \eqref{eq:4th_ord_ode} with speed $\tilde{c}=-\sqrt{c^2+2A}$ and constant of integration $\tilde{A}=0,$ where $d=c+\sqrt{c^2+2A}.$ 
\end{enumerate}
\end{lemma}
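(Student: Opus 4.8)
The plan is to prove both statements by direct substitution into the relevant differential equations, matching coefficients of the resulting polynomial identities in $f$ and its derivatives; neither part requires anything beyond elementary algebra, so the ``work'' is really just organizing the bookkeeping.

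For part 1, I would set $g(\xi) = f(\xi) + b$ and substitute $u(x,t) = g(x - \tilde c\, t)$ into the profile ODE \eqref{eq:profile_eq}. Since $g^{(j)} = f^{(j)}$ for every $j \geq 1$, the only term affected by the shift is the nonlinearity, where $g g' = f f' + b f'$. Collecting terms gives $-(\tilde c - b) f' + f f' + \alpha f''' + f^{(5)}$, which vanishes precisely when $\tilde c - b = c$, because $f$ solves \eqref{eq:profile_eq} with speed $c$. Hence $\tilde c = c + b$. This is, of course, nothing but the Galilean invariance \eqref{eq:galilean} of \eqref{eq:kawahara} applied to the traveling-wave ansatz, and I would phrase it that way.

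For part 2, I would substitute $\tilde f = f - d$ into \eqref{eq:4th_ord_ode} written with speed $\tilde c$ and integration constant $\tilde A$. Again $\tilde f'' = f''$ and $\tilde f^{(4)} = f^{(4)}$, so after expanding $\tfrac12 \tilde f^{\,2} = \tfrac12 f^2 - d f + \tfrac12 d^2$ and using that $f$ satisfies \eqref{eq:4th_ord_ode} with the pair $(c,A)$ to replace $\tfrac12 f^2 + \alpha f'' + f^{(4)}$ by $A + c f$, the equation collapses to $(c - \tilde c - d)\, f + \bigl(A + \tilde c\, d + \tfrac12 d^2\bigr) = \tilde A$. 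For the left-hand side to be independent of $\xi$ the coefficient of $f$ must vanish, forcing $\tilde c = c - d$; the remaining constant term then gives $\tilde A = A + \tilde c\, d + \tfrac12 d^2 = A + c d - \tfrac12 d^2$. Imposing $\tilde A = 0$ yields the quadratic $d^2 - 2 c d - 2 A = 0$ with roots $d = c \pm \sqrt{c^2 + 2A}$, and the choice $d = c + \sqrt{c^2 + 2A}$ produces $\tilde c = c - d = -\sqrt{c^2 + 2A}$, as asserted.

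The only point worth flagging, rather than a genuine obstacle, is that these formulas implicitly require $c^2 + 2A \geq 0$ so that the square root is real; this holds for the solution families treated here and can be recorded as a standing hypothesis. I would also note that the two normalizations are complementary: part 1 removes the background mean by a Galilean shift at the PDE level, while part 2 shows that at the level of the ODE \eqref{eq:4th_ord_ode} one may always translate the profile so that $A = 0$ at the cost of a determined change of wave speed, a reduction that is convenient when parametrizing the periodic orbits.
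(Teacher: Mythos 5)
Your proof is correct and follows essentially the same route as the paper: part 1 is the Galilean invariance \eqref{eq:galilean} read through the traveling-wave ansatz, and part 2 is exactly the paper's second suggested computation (substitute $\tilde f = f - d$ with an undetermined speed shift and solve for $d$ so that $\tilde A = 0$). Your remark that $c^2 + 2A \geq 0$ is needed for the square root to be real is a worthwhile caveat that the paper also records later (in \S\ref{sec:bif_TWs}) when it invokes this reduction.
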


\begin{proof}
Property 1. follows from the Galilean invariance property  \eqref{eq:galilean};  property 2. follows from substituting the formulas for $\tilde{f}, \tilde{c}$ into \eqref{eq:4th_ord_ode} with $A=0,$
or by substituting $\tilde{f}=f-d, \ \tilde{c}=c+e,$ and solving for $d$ and $e$ as functions of $A$ and $c$ to satisfy \eqref{eq:4th_ord_ode} with $A=0.$  
\end{proof}
  
\subsection{Background}
 
The Kawahara equation \eqref{eq:kawahara} and  the singular KdV5 equation, for which $\alpha=0,$ possess properties rather different from the canonical KdV equation
\begin{align}\label{eq:kdv}
u_t + uu_x + \alpha u_{xxx} = 0, \quad \alpha = \pm 1, 
\end{align}
which has  the soliton solution 
\begin{align}\label{eq:kdv_soli}
u(x,t) = \ubar + \alpha a \  {\rm sech}^2 \left(\sqrt{\frac{a}{12}}(x - c t -x_0)\right), \qquad c = \ubar + \alpha \frac{a}{3},
\end{align}
where $a>0$ is the soliton amplitude, $c$ is its velocity and $x_0$ is an arbitrary phase shift. The KdV soliton solution is a wave of elevation if $\alpha = +1$ and a wave of depression if $\alpha = -1$. 
Since the KdV equation is integrable, its solutions can be found using the inverse scattering transform \cite{ablowitz_solitons_1981}. However, the Kawahara equation is not integrable and a general characterization of solutions is not available.

Numerical computations of solitary wave solutions of Eq. \eqref{eq:kawahara} that asymptote to a constant $\ubar$ at infinity were first implemented by Kawahara \cite{kawahara_oscillatory_1972}, who observed that the structure of solutions depends on the choice of the parameter $\alpha$. With the normalization of Eq. \eqref{eq:kawahara} considered here, solitary wave solutions of Eq. \eqref{eq:kawahara} are waves of depression with a velocity less than $\ubar$. For $\alpha = 0$ or $\alpha = -1$, solitary wave solutions that decay to the constant $\ubar$ at $\infty$ exist for any velocity $c < \ubar$, whereas for $\alpha = +1$, solitary wave solutions can only be found for $c < -1/4 + \ubar$. Further details  of solitary wave solutions can be found in the  articles  \cite{amick_homoclinic_1992,champneys_homoclinic_1998}, along with an extensive list of relevant references pertaining to the existence and stability of these solutions. 

In addition to the solitary wave solutions that  decay exponentially to a constant at infinity, the Kawahara equation \eqref{eq:kawahara} also admits {\em generalized} solitary wave solutions when $\alpha = +1$. These solutions differ from a typical solitary wave in that they are not effectively localized, meaning  that instead of  decaying to a constant, a wave of elevation with a positive velocity relative to the background state is accompanied by co-propagating small oscillations that extend to infinity. The wavenumber of the small amplitude oscillations is selected by equating the elevated solitary wave velocity to the linear phase velocity \eqref{eq:phase1} \cite{benilov_generation_1993,grimshaw_weakly_1995,hunter_existence_1988}.

Multi-pulse solitary wave solutions of Eq. \eqref{eq:kawahara}, which resemble multiple separated copies of a single solitary wave have also been studied.  The first analytical work that explains multi-pulse solitary waves was introduced by Gorshkov and Ostrovsky \cite{gorshkov_interactions_1981} where the solitary waves are treated as independent particles with an interaction potential that depends on the length of separation between adjacent  pulses. Extrema of the interaction potential correspond to multi-pulse solutions. In fact, this method can be used to numerically compute an infinite family of multipulse solutions consisting of $N$ copies of the original solitary wave, naturally referred to as $N$-pulses. Using various techniques in dynamical systems theory, infinitely many $N$-pulse solutions of the ODE \eqref{eq:4th_ord_ode} have been shown to exist \cite{buffoni_global_1996,buffoni_bifurcation_1996,champneys_bifurcation_1993}.  A key feature in the construction of such solutions  is the computation of intersections of the stable and unstable manifolds of a hyperbolic equilibrium. A complementary method to construct these solutions uses a Lyapunov-Schmidt reduction and Lin's method \cite{sandstede_instability_1997}. Reference \cite{champneys_homoclinic_1998} provides an extensive review of the methodologies used to study homoclinic solutions for fourth-order reversible dynamical systems. In the case of 2-pulse solutions, rigorous analytical stability results are known \cite{chugunova_two-pulse_2007,parker_periodic_2020} and are complemented by early numerical simulations \cite{buryak_stability_1997}. The extensive collection of traveling wave solutions of the Kawahara and fifth order KdV equations can be largely attributed to the additional degrees of freedom inherent in the fourth order traveling wave ODE \eqref{eq:4th_ord_ode}. The Whitham modulation theory was also used in the description of unsteady dispersive shock wave solutions of the KdV5 and Kawahara equations in \cite{hoefer_modulation_2019,sprenger_shock_2017}. 
  
\subsection{Outline}\label{outline1}

The purpose of this manuscript is to study 
traveling wave solutions of the Kawahara equation \eqref{eq:kawahara} that either asymptote to different periodic waves at $\pm \infty$, or connect a periodic wave to a solitary wave. Such waves were identified for the KdV5 equation $(\alpha=0)$ in \cite{sprenger_discontinuous_2020}; the current paper extends those results and techniques to the Kawahara equation and introduces ideas from dynamical systems to clarify the structure of the solutions. In this subsection we outline the methods and procedure we use to investigate the traveling waves. 

The first key step is, in reality, an assumption about the existence of periodic traveling waves, though there is some theoretical justification for the existence of these solutions in the case of small amplitude waves using a Lyapunov-Schmidt argument\cite{haragus_spectral_2006}. Periodic traveling wave solutions, with speed $c,$ of   \eqref{eq:kawahara} will be associated with solutions $f(\xi),  \xi=x-ct$ of the ODE \eqref{eq:4th_ord_ode}. It is convenient to represent the period $2\pi/k$ explicitly, by setting $f(\xi)=\varphi(k\xi),$ where $k>0$ is the  wavenumber. Then $\varphi(\theta),  \theta=k\xi,$ is $2\pi-$periodic and satisfies
\begin{equation}\label{ode_1}
-c \varphi + \frac{1}{2}\varphi^2 +\alpha k^2\varphi''+  k^4\varphi^{(4)} = A,\quad  \varphi(\theta+2\pi)=\varphi(\theta), 
\end{equation}
where $'=d/d\theta,$ and $A$ is a constant of integration. Assumptions on the periodic traveling waves are summarized in the following. \\
 
\noindent (Assumption): (1) \ {\em For fixed constants $c,A,k,$ each $2\pi$-periodic solution  $\varphi(\theta)$ of \eqref{ode_1}  is associated with a  unique triple $(\ubar, a, k)$ representing the average $\ubar$ over a single period, the amplitude $a$ of the wave and it's wavenumber $k.$} 
(2) \ {\em Each triple $(\ubar, a, k)$ is associated with a $2\pi$-periodic solution of \eqref{ode_1}, in which the constants $c=c(\ubar, a, k)$ and $A=A(\ubar, a, k)$ are functions of the triple, and the solution $\varphi(\theta)$  is unique up to translation by a constant phase $\theta_0$.}\\

In practice, we approximate periodic traveling waves in two ways. The Stokes expansion is an asymptotic series that converges to the wave for small amplitude waves. For larger amplitudes, we rely on numerical methods to calculate a library of periodic solutions and their associated parameter triples. The procedure, summarized in Appendix A,  approximates the periodic solution with a truncated Fourier series with coefficients  computed by solving a nonlinear system of algebraic equations \cite{sprenger_discontinuous_2020,ehrnstrom_traveling_2009}. 

We seek traveling wave solutions   $u(x,t)=f(x-ct)$ of the Kawahara equation \eqref{eq:kawahara}, with fixed parameter $\alpha$, that uniformly approach distinct solutions, $f_\pm(\xi)$ as the traveling wave variable $\xi=x-ct$ approaches infinity: $|\xi| \gg 0$. The functions in the far-field are either periodic solutions of \eqref{eq:profile_eq}, or one is a  solitary wave. These two possibilities are studied separately, as solitary waves are  associated with the singular limit $k\to 0$ in \eqref{ode_1}.  

In the case of two periodic waves in the far field, they must have   velocities that match the TW velocity $c.$ If the triples of the periodic waves are denoted $(\ubar_\pm, a_\pm, k_\pm),$ then this implies they are related by the equation
\begin{equation}
\label{speeds1}
c(\ubar_+, a_+, k_+)=c(\ubar_-, a_-, k_-)= c .
\end{equation}
Equation \eqref{speeds1} is one of three equations, referred to as {\em jump conditions,} for the triples $(\ubar_\pm,a_\pm,k_\pm),$ the other two derived from equation \eqref{eq:4th_ord_ode} and the Hamiltonian \eqref{eq:energy_integral}.  

Similar jump conditions arise in the theory of shock solutions to hyperbolic conservation laws, and the corresponding bifurcation from trivial states is well established, using the shock speed as the bifurcation parameter.  
For our jump conditions, the trivial solution has equal triples, $(\ubar_-,a_-,k_-) = (0,a_+,k_+)$ (due to the Galilean invariance \eqref{eq:galilean}, we can fix $\ubar_+=0$ without loss of generality),  but the speed $c$ is not a suitable bifurcation parameter, as it is determined from either periodic solution. A convenient choice is to define the bifurcation parameter  $\zeta$ to be the average $\zeta = \half(k_++k_-),$ and  then introduce the parameter $\mu=\half(k_+-k_-).$ Instead of fixing the triple on one side, as is often done, we fix only the value $a_+=a$ of the amplitude on the right. Then for fixed $a \geq 0,$  the trivial solution is $(\ubar_-, a_-, \mu)=(0, a, 0)$, which satisfies the jump conditions for each $\zeta.$ 

The jump conditions are now three equations for $(\ubar_-, a_-, \mu)$ and $\zeta.$ Bifurcation points for a fixed value of $a$ are values of $\zeta$ (which is $k=k_+=k_-$ since $\mu=0$) for which the equations linearized about the trivial solution are singular. This corresponds to values of $\zeta$ for which the $3\times 3$ coefficient matrix $J(\zeta)$ is singular. Since the bifurcation parameter appears nonlinearly, $\zeta$ satisfies an equation with multiple solutions, not just the three in classical bifurcation theory for three equations, in which $\zeta$ would appear linearly.
Varying $a$ results in curves of bifurcation points given by functions  $\zeta =k_j(a),$ which follow the polynomial amplitude scaling \eqref{scaling1}  when $\alpha=0,$ but are more complicated for $\alpha=\pm 1.$ Bifurcation theory for nontrivial solutions proceeds as in the classical theory of Crandall and Rabinowitz \cite{crandall_bifurcation_1971}, since the bifurcation points are all simple ``eigenvalues", meaning the null space of the coefficient matrix is one-dimensional. The nondegeneracy conditions (non-zero values of certain coefficients) are verified numerically,  showing that each bifurcation point is located on a single curve of non-trivial solutions.  Each curve of non-trivial solutions bifurcates transcritically, meaning  that along the curve through the bifurcation point, $\zeta$ crosses $k_j(a).$ As in\cite{sprenger_discontinuous_2020}, it is convenient to parameterize each curve of nontrivial solutions with $\ubar_-,$ and to plot the curves as graphs of $k_\pm$ and $a_-$ as functions of $\ubar_-.$ Each point on a non-trivial branch is associated with a pair of periodic solutions. The curves terminate at values of $\ubar_-$ (depending on $a$) for which $k_+=0, $ or $k_-=0.$   

An alternative approach treats the parameters slightly differently, observing that the defining ODE for periodic solutions depends on two parameters $c$ and $A.$ Fixing these two parameters but allowing the wavenumber $k$ to vary then yields a one-parameter Hamiltonian $H(k;c,A)$. This scenario fits into the framework of Bridges and Donaldson \cite{bridges_degenerate_2005} where bifurcation of periodic orbits is related to critical points of $H(k;c,A).$ Since pairs of periodic solutions satisfying the jump conditions have equal values of $H,$ as well as $c$ and $A,$ the graph of $H(k;c,A)$ as a function of $k$ indicates suitable values of $k_\pm.$  
This approach gives pairs of triples $(k_\pm,c,A)$ depending on $H,$ treated as a parameter.

The phase space for the 4th order ODE \eqref{eq:4th_ord_ode}, 
written as a first order system, is the 4-dimensional space $\R^4$, with coordinates $(f,f',f'',f''').$ The Hamiltonian is constant on each solution, so that the level sets $H=constant$ are 3-dimensional hypersurfaces in $\R^4,$ and each periodic solution is a submanifold (topologically a circle) in one of these surfaces.  

Due to the Hamiltonian structure of the ODE, each periodic solution has a Floquet multiplier $\lambda=1, $ of  geometric multiplicity one and algebraic multiplicity 2, and two further multipliers $\lambda$ and $1/\lambda,$ with $\lambda\neq 1$ except at bifurcation points.   These values are also eigenvalues of the monodromy matrix.   When $\lambda>1,$ there is a two-dimensional unstable manifold consisting of solutions of the ODE asymptotic to the periodic solution as $\xi\to-\infty.$    Similarly there is a stable  manifold of solutions corresponding to the multiplier $1/\lambda.$  Interestingly, the multipliers can also be real and negative, or complex conjugates (hence lying on the unit circle in $\C$).
We refer to periodic solutions with real $|\lambda|>1$ as {\em nondegenerate.}

Now suppose we have a pair of   triples,  $(\ubar_\pm, a_\pm, k_\pm)$ satisfying the jump conditions, corresponding to periodic solutions $f_\pm=\varphi_\pm(k\xi)$ of the ODE  \eqref{eq:profile_eq} with the same speed. Consider the unstable
manifold of the periodic solution $f_-$  and the stable manifold of $f_+(\xi)$ at $\xi\sim +\infty$  These two-dimensional manifolds reside on
the same $H=constant$ 3D surface in $\R^4.$ Their intersection is then a curve, generically.  On a  Poincar\'{e} section $u=constant$
 the two manifolds are parameterized by the phases $\theta+$ and $\theta-$ of $\varphi_\pm(\theta)$ and the time-like
variable $\xi$. They each intersect the Poincar\'{e} section (which is 3-D) in a curve. But the $H=constant$ 3-D manifold intersects the
Poincare section in a 2-D surface, and both curves lie in that surface, so their intersection is at a point. When the periodic solutions are close, their values of u overlap, and the Poincar\'{e} section can be chosen with a value of u in the overlap. The two curves are then close
and generically they intersect transversally. This intersection point corresponds to a curve in the $\R^4$ phase
portrait, and is the trace of a TW joining the two periodic solutions. In principle, the curve can be obtained by setting $\xi=0$ at the intersection point, integrating back ($-\infty <\xi\leq 0$), i.e., along  the unstable manifold of $f_-$ and forward $0\leq \xi<\infty$) along the stable manifold of $f_+.$ In this construction of the TW, the two original phases $\theta_\pm$ become irrelevant, having been set by making the intersection point correspond to $\xi=0.$ In making this construction numerically, we encounter a problem corresponding to tracing a stable manifold towards a saddle point in two dimensions: any perturbation will veer off along the unstable manifold as the equilibrium is approached. In our case, increased  fidelity of the periodic asymptotic state can be achieved using refinement within the ODE solver.

This use of invariant manifolds  can be continued along each curve of non-trivial solutions of the jump conditions, for fixed parameter $a,$ as long as the intersection of the stable and unstable manifolds continues to be transversal. Eventually, one of the wavenumbers $k_\pm$ will approach zero, at which point the corresponding periodic orbit   approaches a heteroclinic orbit, and we find a portion of a solitary wave connecting to the other periodic solution.
 
The paper is organized as follows. In Section \ref{sec:periodic_orbits}, we describe a two-parameter family of periodic traveling wave solutions of the Kawawhara equation \eqref{eq:kawahara}. We recall the asymptotic approximation known as the {\em Stokes expansion}  for small amplitude waves, and  complement this analysis with 
numerical computations. 
Also in this section, we compute the Floquet multipliers of each  periodic solution, using the associated eigenvectors to generate   a two-dimensional stable manifold and a two-dimensional unstable manifold for each hyperbolic periodic orbit. 
In Section~\ref{jump_conditions1}, we derive necessary conditions for two periodic solutions to represent asymptotic (far-field) limits of a traveling wave solution of the Kawahara PDE \eqref{eq:kawahara}. These are derived by averaging the 4th-order ODE \eqref{eq:4th_ord_ode} and the Hamiltonian \eqref{eq:energy_integral} over each of the two periodic solutions. A crucial observation is that the   constant of integration $A$  and the   Hamiltonian $H$ must be the same on both periodic orbits, so they may be eliminated from the two averaged equations, by subtraction.  
The resulting pair of equations, together with the condition that both periodic solutions have the same wave speed, constitute a system of three equations referred to as {\em jump conditions.} They  are first analyzed in the weakly nonlinear limit using the Stokes approximation,   and then in the strongly nonlinear regime using the two-parameter library of numerically computed periodic solutions. In Section \ref{sec:heteroclinic_computations}, we complete the numerical construction of the  traveling wave solutions from pairs of periodic waves satisfying the jump conditions.  A similar construction is used to construct traveling  waves that connect a solitary wave  to a periodic wave.  
The paper concludes with a brief summary and discussion in Section~5.



\section{Periodic Traveling Wave Solutions of the Kawahara equation}
\label{sec:periodic_orbits} 

In this section, we describe periodic solutions of Eq. \eqref{eq:profile_eq} with fixed $\alpha,$ corresponding to periodic traveling wave solutions of \eqref{eq:kawahara}. We explore approximations of periodic traveling waves, in \S\ref{sec:periodic_orbits-computations}, their associated Floquet multipliers in \S\ref{floquet1}, and the connection to bifurcation points given by extrema of the Hamiltonian, in \S\ref{bifurcation_points}.

The existence of small amplitude, periodic traveling wave solutions to Eq. \eqref{eq:kawahara} was shown using a Lyapunov-Schmidt reduction \cite{haragus_spectral_2006}. In this same manuscript, the authors demonstrated that small amplitude waves with $\alpha = + 1$ are spectrally stable. Non-resonant, periodic traveling wave solutions have been shown to exhibit instabilities with respect to short-wavelength perturbations \cite{creedon_high-frequency_2021}. In the resonant case, periodic orbits can be approximated by a modification of the Stokes expansion \cite{haupt_modeling_1988}. A numerical study of the stability of resonant periodic solutions found  similar high-frequency instabilities \cite{trichtchenko_stability_2018}. Many studies of periodic solutions of Eq. \eqref{eq:kawahara} rely on asymptotic or numerical approximations of the periodic wave, though some closed form solutions do exist in terms of elliptic functions when $\alpha = -1$ \cite{mancas_traveling_2018}.

 \subsection{Approximate and numerical computations of periodic orbits}
\label{sec:periodic_orbits-computations}

We seek approximations of periodic solutions of \eqref{eq:profile_eq} by asymptotic analysis (for small amplitude waves) or with numerical methods (for larger amplitudes).
 As in \S \ref{outline1}, it is convenient to consider functions $\varphi(\theta)=f(\theta/k),$ where $f(\xi)$ is a periodic solution of  \eqref{eq:profile_eq} with wavenumber $k.$  
Then $\varphi(\theta)$ is a $2\pi$-periodic solution of Eq. \eqref{ode_1}. 
 These periodic solutions are parameterized by three constants:
 \begin{align}\label{parameters1}
\begin{split}
\text{wave mean:}  \qquad \ubar & = \frac{1}{2\pi}\int_0^{2\pi} \varphi(\theta) \ d\theta ,\\
\text{amplitude:}\qquad  a & = \max_\theta \varphi(\theta) - \min_\theta \varphi(\theta), \\ 
\text{wavenumber:} \qquad k.  &
\end{split}
\end{align}
We additionally assume that the wave speed $c$ is associated uniquely with the triple $(\ubar,a,k),$ referred to as  {\em wave parameters.} We thus have a three-parameter family of periodic traveling waves with associated  wave speeds:
$$
\varphi=\varphi(\theta;\ubar,a,k) ; \ \ c= c(\ubar,a,k).
$$
The Galilean invariance property \eqref{eq:galilean} is then expressed as 
\begin{align}\label{eq:shift_per}
\varphi(\theta;\ubar,a,k) = \ubar + \varphi(\theta;0,a,k), \qquad c(\ubar,a,k) = c(0,a,k) + \ubar.
\end{align}
This connection reduces the description of the three-parameter family to that of solutions $\varphi$ with zero mean $\ubar=0.$
 
\subsubsection{Stokes wave approximation}
Periodic traveling waves with a small amplitude $a>0$ can be approximated with the Stokes expansion. The Stokes-Poincar\'{e}-Lindstedt method relies on expanding both the wave profile $\varphi=\varphi(\theta;\ubar,a,k)$ and its frequency $\omega=\omega(\ubar,a,k)=kc(\ubar,a,k)$  in a power series in the small   parameter $a$
\begin{subequations}\label{eq:stokes_exp}
\begin{align}
\varphi &= \ubar + a \varphi_1(\theta) + a^2 \varphi_2(\theta) + \mathcal{O}(a^3) \\
\omega &= \omega_0 + a^2 \omega_2 + \mathcal{O}(a^3). 
\end{align} 
\end{subequations}
Here, coefficients $\varphi_n(\theta)$ of $a^n$ are $2\pi$ periodic functions to be determined such that 
$$
a^n\varphi_n(\theta) \ll a^m \varphi_m(\theta)
$$ 
for $n > m$ so that the asymptotic series is well-ordered. To leading order, the small parameter $a$ is the wave amplitude defined in \eqref{parameters1}. The linear dispersion  $\omega_0$ is defined in \eqref{eq:disp1},  and $a^2\omega_2$  is the lowest order nonlinear correction.  

Inserting the approximations \eqref{eq:stokes_exp} into the traveling wave ODE \eqref{ode_1}, and solving at increasing orders in $a$ yields the Stokes wave to $\mathcal{O}(a^2)$ 
\begin{align}\label{eq:stokeswave}
\varphi (\theta) &= \ubar + \frac{a}{2}\cos(\theta) - \frac{a^2}{240 k^4-48 \alpha  k^2} \cos 2\theta + \mathcal{O}(a^3)\\
\label{eq:stokes_freq}
\omega &= \ubar k  - \alpha k^3 + k^5 - \frac{a^2}{480 k^3-96 \alpha  k} + \mathcal{O}(a^3). 
\end{align}

The expansion is singular for $k^2=\alpha/5,$ which corresponds to the first wavenumber satisfying the resonance condition \eqref{eq:resonance}.  Near this resonant wavenumber, the Stokes expansion must be modified to include two harmonics of comparable magnitude at $\mathcal{O}(a)$. The appropriate modification of the Stokes expansion for resonant solutions of Eq. \eqref{ode_1} was developed by Haupt and Boyd \cite{haupt_modeling_1988}.  
However, this issue is beyond the scope of this paper, and does not affect our use of 
the leading order terms in the expansion. 

\subsubsection{Numerical computation of periodic orbits}\label{num_p}
Large amplitude periodic waves are approximated via a Fourier collocation method, the details of which can be found in Appendix~A. In general, each member of the three parameter family of periodic solutions is defined by the wave parameter triple $(\ubar,a,k)$ and has a corresponding velocity $c$. As mentioned previously in this section, we appeal to the Galilean transformation \eqref{eq:galilean} so that only periodic solutions with zero mean are computed. Moreover, the   amplitude scaling  \eqref{scale1}  for the KdV5 equation (in which $\alpha=0$ in Eq. \eqref{eq:kawahara}) further reduces the calculation   to  the single parameter family for which the amplitude of the periodic TWs can be taken to be $a = 1$ for which numerical calculations are reported in \cite{sprenger_discontinuous_2020}.

Periodic traveling wave solutions are computed to high accuracy for discrete values $(a_i,k_j),$ being careful about resonant waves when $\alpha = +1$. For $\alpha = -1$, $a_i = i \Delta a$ with $i = 1,2,\ldots N_a$ and $k_j = j\Delta k$ with $j = 1,2,\ldots, N_k$. We choose $N_a = 1600$ and $N_k = 300$ and $\Delta a = \Delta k = 0.005$. For $\alpha = +1$, we take care to avoid the small amplitude resonant waves that satisfy the condition \eqref{eq:resonance}. Consulting the expansion \eqref{eq:stokes_exp}, we observe a resonance at $k = \sqrt{1/5} \approx 0.447$.  Periodic waves are computed on two separate amplitude-wavenumber grids, but with the same spacing $\Delta a = \Delta k = 0.005$. For short wavelength waves,  we compute on the discrete grid  $(a_i,k_j)$ with $a_i = i \Delta a$ and $k_j = 0.445 + j \Delta k$ with $i = 1,2,\ldots,N_{a,1}$ and $j = 1,2,\ldots, N_{k,1}$. We choose $N_{a,1} = 1600$, $N_{k,1} = 211$ and a step size of $\Delta a = \Delta k = 0.005$. For longer wavelengths, we  compute on the discrete grid with $a_i = 0.65 + i \Delta a$ and $k_j = j \Delta k$ with $i = 1,2,\ldots,N_{a,2}$ and $j = 1,2,\ldots, N_{k,2}$ and  $N_{a,2} = 1471$ and $N_{k,2} = 89.$

In Figures \ref{fig:alpha_m1_per_waves} ($\alpha = -1$) and \ref{fig:alpha_p1_per_waves} ($\alpha = + 1$), the truncated Stokes wave profiles \eqref{eq:stokeswave} are compared to the corresponding numerically computed periodic waves. For larger wavenumbers, the Stokes expansions and the numerical profiles are indistinguishable, but for smaller wavenumbers, the Stokes expansion is a poor approximation.  Illustrative examples of this agreement are shown in Figs. \ref{fig:alpha_m1_per_waves}(a), \ref{fig:alpha_m1_per_waves}(b) \ref{fig:alpha_p1_per_waves}(a), and \ref{fig:alpha_p1_per_waves}(b). These comparisons between Stokes solution \eqref{eq:stokeswave} and the numerically computed periodic orbits suggest that weakly nonlinear approximations may provide helpful, but limited, information on the traveling wave solutions connecting two distinct periodic orbits. 
 
\begin{figure}[h]
\begin{center}
\includegraphics[scale=0.4,page = 1]{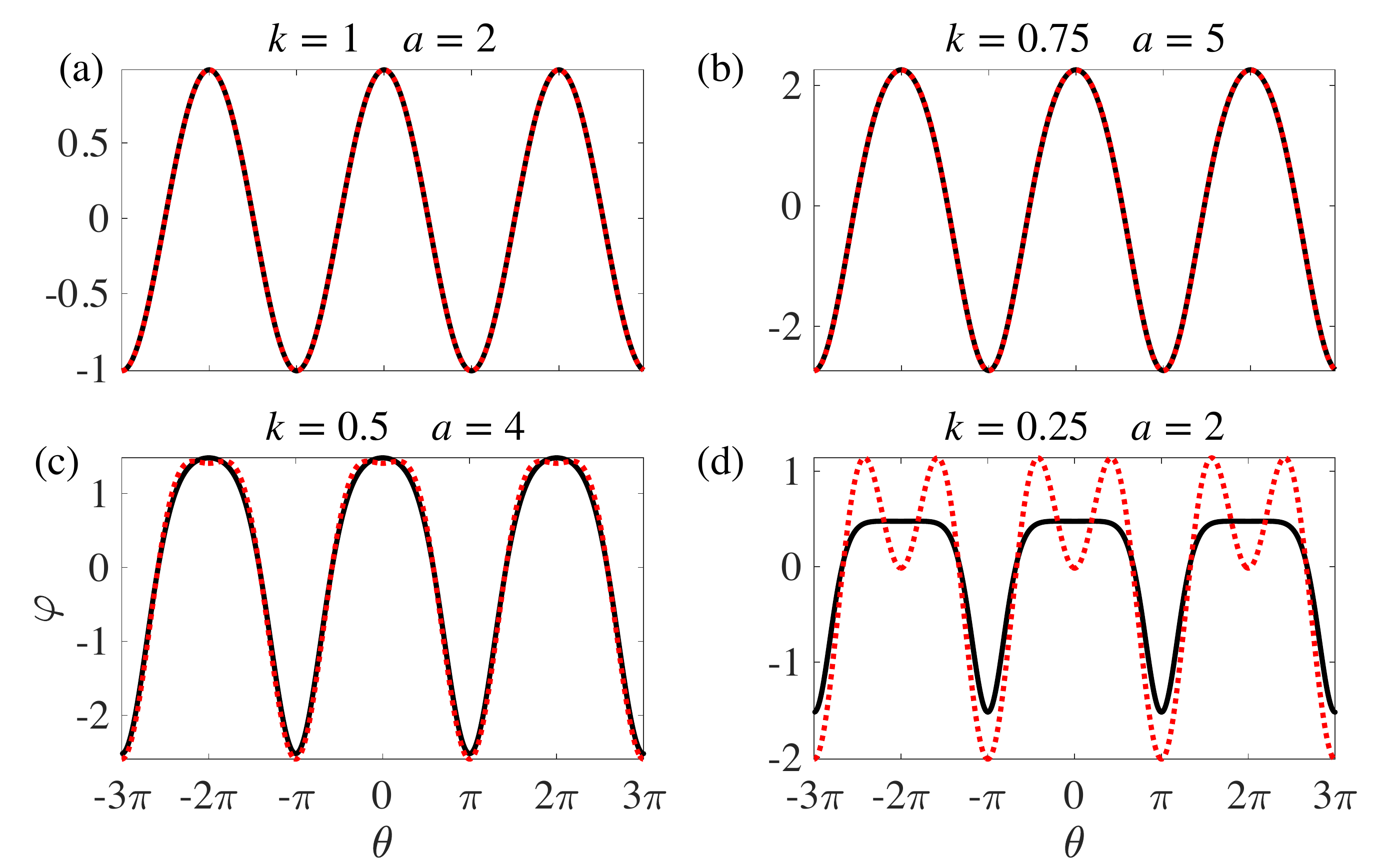}
\caption{($\alpha = -1.$) \ Comparison of Stokes expansions \eqref{eq:stokeswave} (dashed, red curves) to numerical solutions (solid, black curves).}
\label{fig:alpha_m1_per_waves}
\end{center}
\end{figure}

\begin{figure}[h]
\begin{center}
\includegraphics[scale=0.4,page = 2]{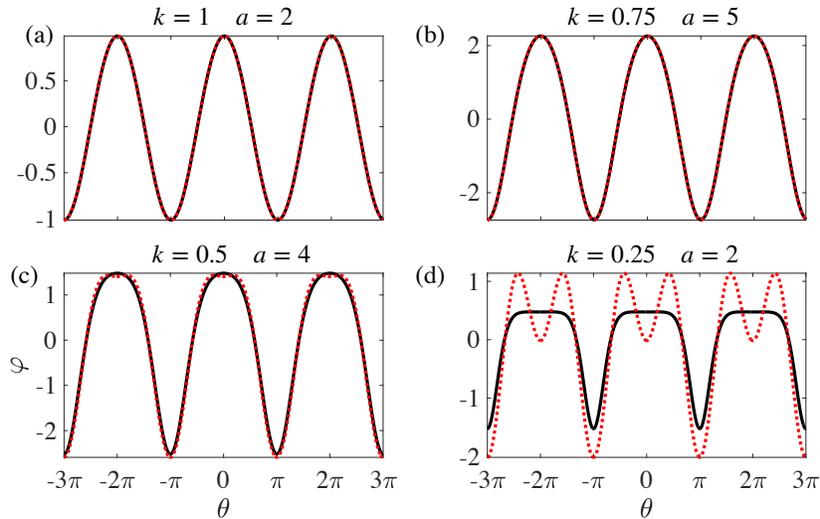}
\caption{($\alpha = +1.$) \ Comparison of Stokes expansions \eqref{eq:stokeswave} (dashed, red curves) to   numerical  solutions (solid, black curves).}
\label{fig:alpha_p1_per_waves}
\end{center}
\end{figure}

\subsection{Floquet multipliers}\label{floquet1}

The computation of Floquet multipliers for  periodic orbits is a key component of our construction of traveling waves. We begin by outlining the theoretical basis for the calculations. A thorough treatment of Floquet theory for Hamiltonian systems can be found in various texts \cite{koon_dynamical_2011,abraham_foundations_1987,wiggins_introduction_2003}.

Let $\varphi(\theta)$ be a $2\pi$-periodic solution of \eqref{ode_1}. The {\em flow map} $\Phi(\xi)$ is defined to be the linear map that governs the evolution of perturbations near the periodic orbit. It is given by the matrix-valued solution of the linear initial value problem 
\begin{equation}\label{eq:flow_map}
\frac{d\Phi}{d\xi} = \mathcal{B} \Phi, \quad \Phi(0) = I_4, 
\end{equation}
in which 
$$
\mathcal{B}(\xi)  = \left[\begin{array}{crrr}
0  & \ \ 1    &  \ \ 0   &  \ \ 0 \\ 
0 & 0 & 1 & 0 \\
0 & 0 & 0 & 1\\
c - \varphi(k\xi) & 0 & -\alpha & 0\\ 
\end{array}\right], $$
and $I_4$ denotes the $4 \times 4$ identity matrix. Evaluating $M=\Phi(2\pi/k)$ defines  the   {\em monodromy matrix,}  whose four eigenvalues $\left\{\lambda_i\right\}_{i = 1}^4$  are the {\em Floquet multipliers} of the periodic orbit  $\varphi.$ Since Eq. \eqref{eq:4th_ord_ode} has a Hamiltonian structure \eqref{hamiltonian1}, certain properties of the Floquet multipliers are guaranteed. The monodromy matrix, $M$, has an eigenvalue  $\lambda=+1$ with algebraic multiplicity two, and geometric multiplicity one, with eigenvector corresponding to $\varphi'(0)$. The remaining two {\em nontrivial} Floquet multipliers are reciprocals $\lambda, 1/\lambda$ of one another.  If a nontrivial Floquet multiplier $\lambda$  is real, the solution $\varphi$ is said to be {\em hyperbolic,} and if it is non-real (i.e.,  $|\lambda| = 1$ and $\lambda\neq \pm 1$), then  $\varphi$ is {\em elliptic}. These properties follow from the fact that the monodromy matrix of an autonomous Hamiltonian system is symplectic \cite{wiggins_introduction_2003}. 

To calculate $M$ and  the nontrivial Floquet multipliers,  we numerically integrate the ODE \eqref{eq:flow_map} for each periodic orbit $\varphi$ (computed in \S \ref{sec:periodic_orbits-computations}), with parameters $(k,a),$ and average $\ubar=0$.   In Figure \ref{fig:floquet_mult_config}, we illustrate five possible configurations of the Floquet multipliers $\lambda$, together with their algebraic multiplicity (when two or greater) in relation to the unit circle in $\C.$  

When the nontrivial Floquet multipliers $\lambda, 1/\lambda$ are real, as in Figs.\ref{fig:floquet_mult_config}(a,e), the periodic orbit has both a two-dimensional unstable manifold, and a two-dimensional stable manifold. The two dimensions come from the independent variable $\xi$ for the ODE \eqref{eq:4th_ord_ode} and the phase $\theta$ parameterizing points on the periodic orbit.  In the case of positive multipliers, the invariant manifolds are topologically cylinders, whereas for negative multipliers, they are topologically M\"o{}bius strips \cite{aougab_isolas_2019}. When Floquet multipliers lie on the unit circle, as in Figs. \ref{fig:floquet_mult_config}(b,c), there are no well-defined stable or unstable manifolds. However, non-real Floquet multipliers, as in Fig. \ref{fig:floquet_mult_config}(c), indicate that the periodic orbit is surrounded by an invariant torus, the consequences of which have been investigated numerically for  water wave models \cite{chen_numerical_1980,saffman_long_1980,vandenbroeck_new_1983,zufiria_weakly_1987}.

\begin{figure}[h] 
\begin{center}
\includegraphics[scale=0.5]{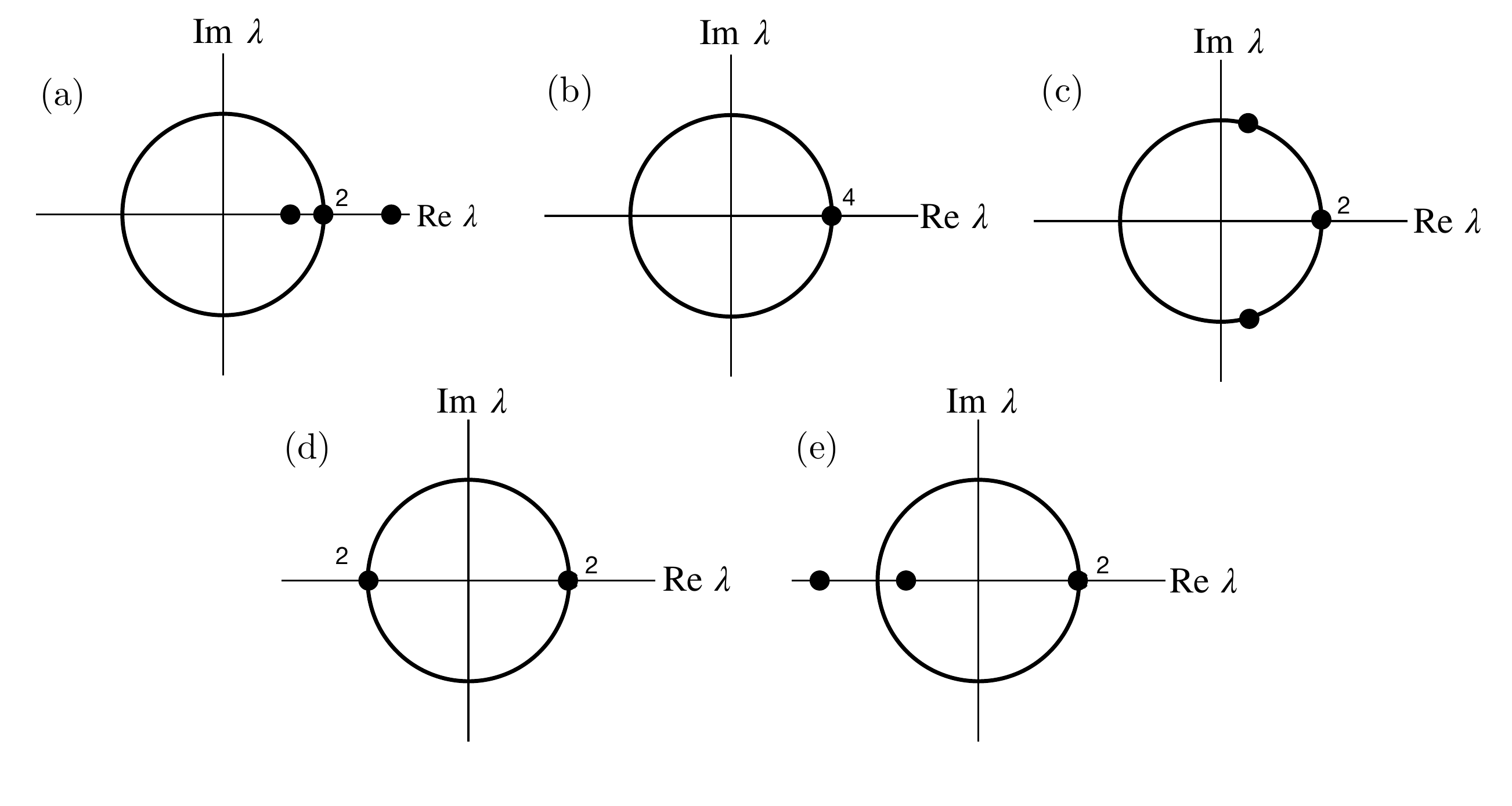}
\caption{Configuration of the Floquet multipliers (black dots) relative to the unit circle in the complex $\lambda$-plane.}
\label{fig:floquet_mult_config} 
\end{center}
\end{figure}

In Figure~\ref{fig:alpha_m1_floquet} we represent the results for $\alpha=\pm 1$ as solid red curves where the Floquet multipliers all coalesce at $\lambda=+1,$ and dashed blue curves, where the nontrivial multipliers coalesce at $\lambda=-1.$ Between the solid curves in Figure~\ref{fig:alpha_m1_floquet} 
there are regions identified by a ``$+$" symbol in the $(k,a)$ plane where the nontrivial multipliers are positive,  as in Figure \ref{fig:floquet_mult_config}(a). Regions bounded by dashed curves are marked by a ``$-$'' symbol; they indicate where nontrivial multipliers are negative, as in Figure \ref{fig:floquet_mult_config}(e). Regions filled with horizontal lines in Figure \ref{fig:alpha_m1_floquet} are bounded on one side by a dashed curve and on the other by a solid curve; they are where the Floquet multipliers are arranged as in Figure \ref{fig:floquet_mult_config}(c), with a complex conjugate pair on the circle between $+1$ and $-1.$ Such regions are typically so narrow that the dashed and solid curves are indistinguishable (and appear as red/blue curves).  In Figure \ref{fig:alpha_m1_floquet}(a), they are not at all visible, whereas in Figure \ref{fig:alpha_m1_floquet}(b) regions corresponding to elliptic periodic orbits are clearly visible only for larger values of $k.$

The limits of the $k-a$ parameter regime in  Figs. \ref{fig:alpha_m1_floquet}   are related to complications in the numerical calculations. Below $k = 0.2$  the largest Floquet multiplier typically grows well above $10^8$, so the condition number of the monodromy matrix is correspondingly very high  and accurate numerical computations of the Floquet multipliers necessitates higher than double numerical precision. In Figure \ref{fig:alpha_m1_floquet}(b), the computations are not continued into the gray region in the $(k,a)$ plane due to the presence of small amplitude resonant periodic solutions that were avoided in the numerical computations in \S \ref{sec:periodic_orbits-computations}. 

\begin{figure}[h]
\begin{center}
\includegraphics[scale=0.333]{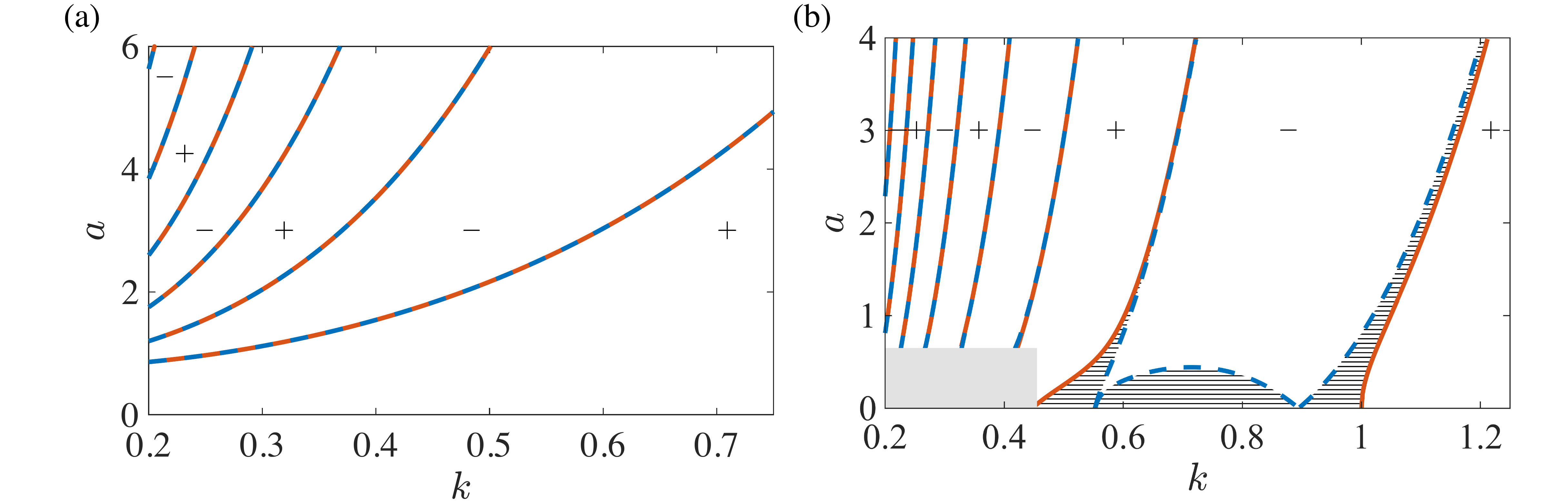}
\caption{Floquet multipliers corresponding to zero mean, periodic  traveling wave solutions of Eq. \eqref{eq:kawahara} with (a) $\alpha = -1$ and (b) $\alpha = +1$.  Dashed curves, corresponding to parameter values for which two Floquet multipliers coalesce at $-1,$ are indistinguishable from most of the solid curves, where all Floquet multipliers coalesce at $+1.$ The $\pm$  signs label regions bounded by either two solid curves, or two dashed curves. They indicate the sign of the nontrivial Floquet multipliers.}
\label{fig:alpha_m1_floquet}
\end{center}
\end{figure}

\subsection{Bifurcation of Traveling Waves}\label{bifurcation_points}
\label{sec:bif_TWs}

Observing Figure \ref{fig:alpha_m1_floquet}, we see numerous curves in the $(k,a)$ parameter space for which the corresponding $2\pi/k$-periodic wave has all Floquet multipliers coalescing at $+1,$ as in Figure~\ref{fig:floquet_mult_config}(b). In this section, we make the connection between these curves and bifurcation of periodic orbits at critical points of the Hamiltonian $H(k)$ as a function of wavenumber $k$ alone, keeping other parameters constant. 

For this discussion, we study a one-parameter family of periodic solutions of Eq. \eqref{ode_1} with a fixed wavespeed $c$ and constant of integration $A$. Using the Galilean transformation from Lemma~\ref{Lemma_1}, the constant of integration $A$ can be eliminated provided $c^2 + 2 A> 0,$ so we   set $A=0$ without loss of generality.   

A family of periodic solutions of  \eqref{ode_1} is computed for fixed $c$ and $A=0$, parameterized solely by their wavenumber $k$. Numerical computations of wave parameters $a, \ubar$ are given by functions of this sole parameter $a = a(k;c)$ and $\ubar = \ubar(k;c)$. In this scenario, the Hamiltonian, $H = H(k)$ in  \eqref{hamiltonian1} is a function of $k$ alone. The Hamiltonian is oscillatory for all velocities $c < -1/4$ if $\alpha = \pm 1$ \cite{chardard_stabilite_2009}. Illustrative examples of oscillatory $H$ are shown in Figure \ref{fig:hamiltonian_fixed_c} for fixed velocities $c$ chosen to accentuate oscillations. Each value of $k$ for which $H'(k) = 0$ corresponds to the four Floquet multipliers coalescing at $+1$ \cite{bridges_degenerate_2005}. In fact, there are an infinite number of such points accumulating as $k \to 0$ 
\cite{chardard_computing_2009}. 

From numerical calculations, we find that each such critical point of $H(k)$ is nondegenerate, so is a local extremum. Suppose $H(k)$ has a minimum at $k=k_0.$ Then for $H-H(k_0)>0$ but small, there are two values $k = k_\pm$ close to $k_0$ such that $H(k_+)=H(k_-)$ and $k_-<k_0<k_+.$  These values correspond to the wavenumbers of periodic orbits $\varphi_\pm$ that are close in the phase portrait, but have different averages $\ubar_\pm$ and amplitudes $a_\pm$ that depend on the wavenumbers $k_\pm$. From the way the Hamiltonians $H(k)$ are calculated, $\varphi_\pm$ satisfy the same ODE, i.e., with the same values of $c$ and $A=0$ as the periodic wave solution $\varphi_0$ with wavenumber $k=k_0.$  Numerical experiments indicate that for $\alpha = +1$, the Hamiltonian is oscillatory when the periodic wave velocities are $c \leq -1/4$ with constant of integration $A =0$. The Hamiltonian is similarly oscillatory for $\alpha = 0$, for any negative velocity with $A =0$.  When $\alpha = -1$, oscillations of the Hamiltonian only occur for $c < -1/4$; an example with $c = -1$ is shown in Fig. \ref{fig:hamiltonian_fixed_c}(a) with extrema of the Hamiltonian shown in the insets. The   function $H(k)$ is  monotonic for $\alpha = -1$ and $c = -1/4$; it  is shown in panel \ref{fig:hamiltonian_fixed_c}(b). These numerical experiments are consistent with similar  computations for the Kawahara equation in \cite{chardard_computing_2009}.   

In Figure~\ref{fig:hamiltonian_floquet_plot}(a) we plot the Hamiltonian near the critical point $k_0 \approx 0.6786$ together with the real part of Floquet multipliers $\lambda(k)$ of the periodic orbits over the same range of $k$. For this computation, we use $c = -1$, $A = 0$ and $\alpha = +1$. We see that $\varphi_-$ has real Floquet multipliers, whereas $\varphi_+$ has complex conjugate Floquet multipliers for $0.6786 \lesssim k_+ \lesssim 0.6802$.  This suggests there is an invariant torus associated with $\varphi_+,$ on which the structure of orbits may be quite complex. In particular, for rational values of ${\rm Arg}(\lambda), $ there are subharmonic solutions, either elliptic (with complex Floquet multipliers) or hyperbolic (with real Floquet multipliers). As $H$ decreases, the corresponding $\varphi_+$ regains real but negative Floquet multipliers, and now has well-defined stable and unstable manifolds, until $H$ nears a minimum of $H(k),$ where the multipliers again become complex conjugates, moving around the unit circle until they coalesce at $\lambda=1$ as the minimum is reached. Similar numerical computations near other extrema of the Hamiltonian show the same structure. That is,  for wave numbers on one side of the extremum,   the periodic waves have complex valued Floquet multipliers and on the other side the periodic waves are hyperbolic. This is a generic feature of the Hamiltonian for a one-parameter family of periodic solutions in a Hamiltonian system (see Theorem 5(iii) in  \cite{sepulchre_localized_1997}).

\begin{figure}[h]
\begin{center}
\includegraphics[scale=0.85]{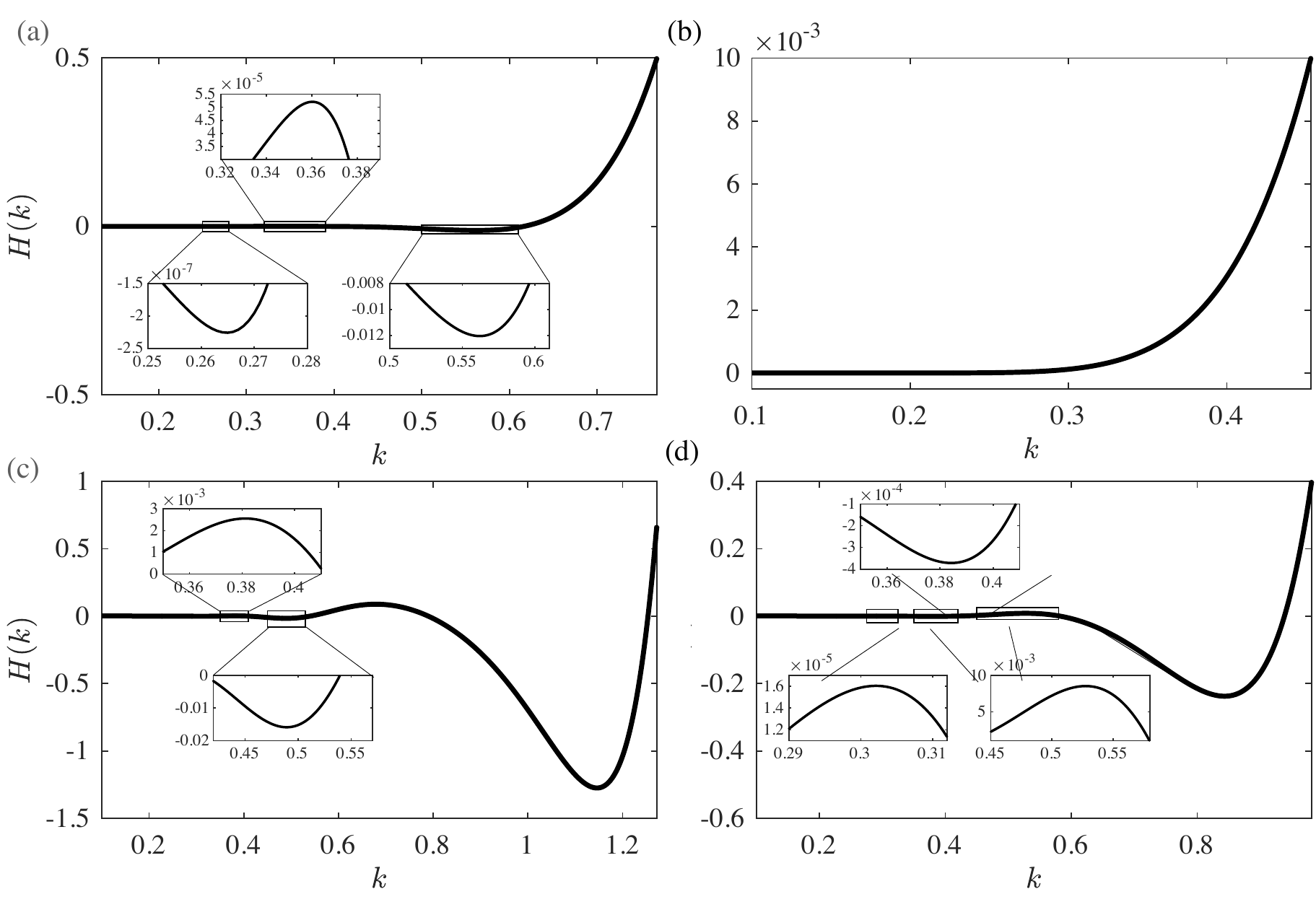}
\caption{Hamiltonian $H(k)$ 
(a) $\alpha = -1$, $c = -1$ (b) $\alpha = -1$ and $c = -0.25$ (c) $\alpha = +1$ and $c = -1$ (d) $\alpha = 0$ and $c = -1$ Insets are zoomed in short sections of the graph to show oscillations in cases where they are present.  }
\label{fig:hamiltonian_fixed_c}
\end{center}
\end{figure}

\begin{figure}[h]
\begin{center}
\includegraphics[scale=0.8]{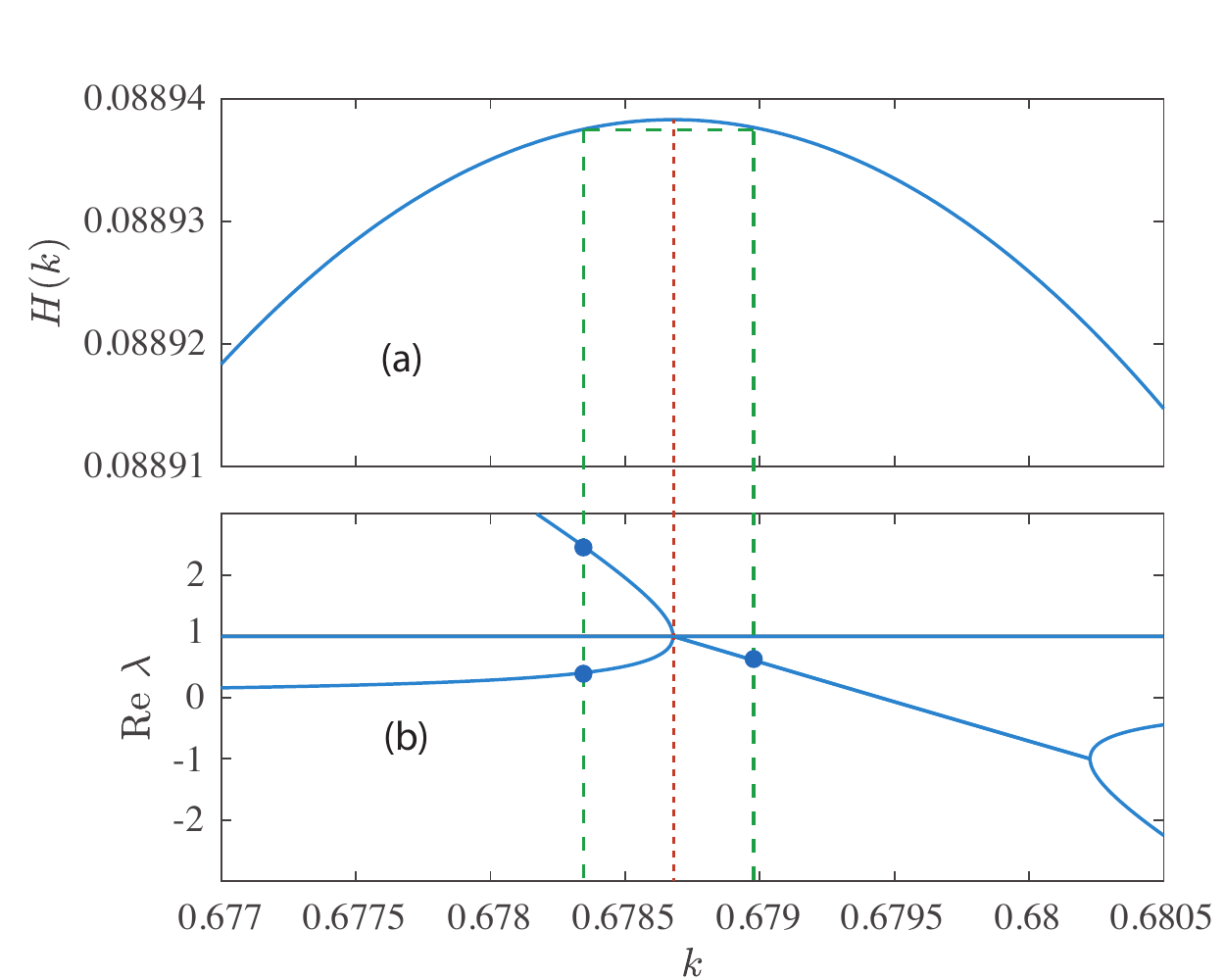}
\caption{($\alpha=+1, c=-1$) (a) $H(k)$ near a maximum at $k_0\sim 0.6786$ (b) Real part of Floquet multipliers $\lambda(k). $ The red dashed line indicates the critical point of $H(k).$ The green lines indicate equal values of $H(k)$ at $k_-<k_0$ and $k_+>k_0.$ }
\label{fig:hamiltonian_floquet_plot}
\end{center}
\end{figure}

  \section{Jump Conditions}\label{jump_conditions1}

In this section, we derive three conditions on pairs $f_\pm(\xi)$ of periodic solutions of the traveling wave ODE \eqref{eq:profile_eq} that are necessary conditions for a traveling wave solution of the PDE \eqref{eq:kawahara} which connects the two periodic waves. We relate these compatibility conditions to three jump conditions involving the corresponding constants $\ubar_\pm, a_\pm, k_\pm$ and their associated wave speeds $c(\ubar_\pm, a_\pm, k_\pm).$ The resulting equations for six unknowns are reduced using the scaling properties of subsection \ref{scaling1}, so that we are left with three equations in four unknowns. For small amplitude periodic solutions given by the Stokes expansion \eqref{eq:stokes_exp}, the system of equations is approximated by algebraic equations that can be solved explicitly to demonstrate the structure of solutions as a small set of curves in $\R^3.$ More generally, the system of equations is solved numerically, yielding multiple curves of solution. These computations illustrate that while the weakly nonlinear Stokes approximation provides valuable insight, it does not accurately capture the solutions of the jump conditions when the wavenumber of one of the periodic orbits is small. In the final subsection, we generalize the jump conditions to apply to traveling waves that connect a solitary wave with a given background $\ubar$ (in the singular limit $k\to 0$) to a periodic solution. 

Analysis of the jump conditions does not follow the pattern suggested by the theory of shock wave solutions of systems of conservation laws, in which characteristic speeds determine bifurcation points \cite{schaeffer_classification_1987} along a trivial solution, and the wave speed is an unknown to be determined along non-trivial branches. Instead, bifurcation points satisfy a nonlinear equation with many solutions, and the wave speed is determined by the parameters $\ubar, a, k$ of either periodic wave. 
 
 \subsection{Derivation of Jump Conditions}
 
 Recall that all traveling wave solutions $f(\xi), \xi=x-ct$ of the Kawahara equation \eqref{eq:kawahara} satisfy the pair of equations \eqref{eq:4th_ord_ode}, \eqref{hamiltonian1}, with constants $c,A,H.$ Now let $f(\xi)$ be a periodic traveling wave solution. We modify the averages \eqref{parameters1} of $2\pi$-periodic functions $\varphi$  to apply to $2\pi/k$-periodic functions. Averaging the traveling wave profile ODE \eqref{eq:4th_ord_ode} and the traveling wave Hamiltonian \eqref{hamiltonian1} over one wave period, we  arrive at 
 \begin{subequations}\label{averages1}
 \begin{align}
 -c \overline{f} +\frac 12  \overline{f^2}&=A,\\[6pt]
 \frac c2  \overline{f^2}-\frac 13  \overline{f^3}+\frac 32\alpha \overline{(f')^2}-\frac 52 \overline{(f'')^2}&=H, 
 \end{align}
 \end{subequations}
 where $\overline{F[f]} = \frac{k}{2\pi}\displaystyle\int_0^{2\pi/k} F[f(\xi)] d \xi$ is the average of the quantity $F[f]$ over the period of the $2\pi/k$-periodic function $f(\xi)$. To derive \eqref{averages1}(b) we have integrated by parts to simplify the expression.  
 
 Now let $f=f(x-ct)\in C^5(\R)$ be a traveling wave solution of \eqref{eq:kawahara}, such that  
 \beq\label{limits1}
 f(\xi)\to f_\pm(\xi) \quad \mbox{in} \ C^5  \  \mbox{as}\ \xi\to \pm\infty,
 \eeq
 for $C^5$ periodic solutions $f_\pm $ of  \eqref{eq:profile_eq}.
Then $f(\xi)$ also satisfies the ODE   \eqref{eq:profile_eq}, and hence $f, f_\pm$ are smooth $(C^\infty)$ and satisfy  the pair of equations \eqref{eq:4th_ord_ode}, \eqref{hamiltonian1}, for some  constants $c,A,H.$  Consequently, $f_\pm(\xi),$ associated with  parameters $\ubar_\pm, a_\pm, k_\pm,$ and speed $c(\ubar_\pm, a_\pm, k_\pm)$ respectively,  both satisfy \eqref{averages1} with the same speed $c.$  Thus, by eliminating $A$ and $H,$ we obtain the jump conditions
 \begin{subequations}\label{jump_conditions}
 \begin{align}\label{average1}
 -c (\overline{f_+}-\overline{f_-}) +\frac 12   \left(\overline{f_+^2} - \overline{f_-^2}\right)&=0,\\[6pt]
 \frac c2 \left( \overline{f_+^2}-\overline{f_-^2}\right)-\frac 13  \left(\overline{f_+^3}-\overline{f_-^3}\right)+\frac 32\alpha  \left(\overline{(f_+')^2}-\overline{(f_-')^2}\right)-\frac 52  \left(\overline{(f_+'')^2}-\overline{(f_-'')^2}\right)&=0, \label{hamil_conservation}\\[6pt]
 c(\ubar_+, a_+, k_+)-c(\ubar_-, a_-, k_-)&=0 \label{wave_conservation}, 
 \end{align}
 \end{subequations}
The first jump condition \eqref{average1} implies that both orbits $f_\pm$ individually satisfy the profile equation \eqref{eq:4th_ord_ode}, which is an obvious requirement of the traveling wave. The second jump condition \eqref{hamil_conservation} implies that the two periodic orbits lie on the same level set of the Hamiltonian, which is referred to as a wavenumber selection in  Remark I.1(b) of  \cite{knobloch_defectlike_2019}. The final jump condition \eqref{wave_conservation} expresses  $c(\ubar_\pm, a_\pm, k_\pm)=c.$ Note that all of the averages in these equations depend on the parameters $\ubar_\pm, a_\pm, k_\pm,$ but aside from $\overline{f_\pm}=\ubar_\pm,$ the other averages have to be evaluated to find their dependence on parameters. The connection between the jump conditions \eqref{jump_conditions} and traveling waves is summarized in the following.
 
 \begin{theorem}
 Suppose $f(\xi), \ -\infty<\xi<\infty$ is a   solution of the traveling wave ODE \eqref{eq:profile_eq}, corresponding to the traveling wave   with speed   $c,$ $u(x,t)=f(x-ct),$ solving the Kawahara equation \eqref{eq:kawahara}.  Let $f_\pm(\xi)$ be periodic solutions of  \eqref{eq:profile_eq} with the property that 
 $$
 f(\xi)\to f_\pm(\xi) \ \mbox{in} \ C^5 \  \mbox{as}\ \xi\to\pm\infty.
 $$
 Then $f_\pm$ satisfy the jump conditions \eqref{jump_conditions}.
 \end{theorem}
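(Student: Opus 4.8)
The plan is to turn the three-line argument given just before the statement into a proof: it suffices to pin down three facts and then apply the averaged identities \eqref{averages1}.

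\textbf{Step 1: reduction to integration constants, and the speed condition.} Since $u(x,t)=f(x-ct)$ solves \eqref{eq:kawahara}, the profile $f$ satisfies \eqref{eq:profile_eq}, hence also \eqref{eq:4th_ord_ode} for some constant $A$ and \eqref{eq:energy_integral} for some constant $H$, as recorded in \S\ref{TW_Hamiltonian}. Each periodic solution $f_\pm$ of \eqref{eq:profile_eq} — necessarily with the same wave speed $c$, since $c$ is the parameter appearing in that equation — likewise satisfies \eqref{eq:4th_ord_ode} with its own constant $A_\pm$ and carries its own Hamiltonian value $H_\pm$. Being periodic traveling waves of \eqref{eq:kawahara} of speed $c$, the waves $f_\pm$ satisfy $c=c(\ubar_\pm,a_\pm,k_\pm)$ by the Assumption (the speed being a function of the triple), which is precisely \eqref{wave_conservation}.

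\textbf{Step 2: the integration constants agree.} Here the hypothesis $f(\xi)\to f_\pm(\xi)$ in $C^5$ is used. The left-hand side of \eqref{eq:4th_ord_ode} and the right-hand side of \eqref{eq:energy_integral} are continuous functions of $f$ and its derivatives up to order four, so the difference between their values along $f$ and along $f_\pm$, evaluated at the same $\xi$, tends to $0$ as $\xi\to\pm\infty$. Along $f$ these quantities equal $A$ and $H$; along $f_\pm$ they equal $A_\pm$ and $H_\pm$. Equivalently, the periodic function $\xi\mapsto\bigl(-cf_\pm(\xi)+\tfrac12 f_\pm(\xi)^2+\alpha f_\pm''(\xi)+f_\pm^{(4)}(\xi)\bigr)-A$ tends to $0$ at infinity, hence vanishes identically, so $A_\pm=A$; the same reasoning gives $H_\pm=H$.

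\textbf{Step 3: averaging and subtraction.} Applying the averaged identities \eqref{averages1}, already derived in the text for an arbitrary periodic traveling wave, separately to $f_+$ and to $f_-$ produces two copies of \eqref{averages1}(a) and two of \eqref{averages1}(b); recall that their derivation uses only that period-averages of $f_\pm''$ and $f_\pm^{(4)}$ vanish, the substitution of the pointwise relation $A=-cf_\pm+\tfrac12 f_\pm^2+\alpha f_\pm''+f_\pm^{(4)}$ into the term $Af_\pm$, and the period integration-by-parts identities $\overline{f_\pm f_\pm''}=-\overline{(f_\pm')^2}$, $\overline{f_\pm f_\pm^{(4)}}=\overline{(f_\pm'')^2}$, $\overline{f_\pm'''f_\pm'}=-\overline{(f_\pm'')^2}$. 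By Steps 1 and 2 these copies share the same $c$, $A$ and $H$, so subtracting the ``$+$'' and ``$-$'' versions of \eqref{averages1}(a) eliminates $A$ and gives \eqref{average1}, and subtracting the two versions of \eqref{averages1}(b) eliminates $H$ and gives \eqref{hamil_conservation}. Together with \eqref{wave_conservation} this is \eqref{jump_conditions}. I expect the only genuinely delicate point to be Step 2 — fixing the precise meaning of ``$f\to f_\pm$ in $C^5$'' and invoking the elementary fact that a continuous periodic function tending to zero at infinity must vanish identically; Steps 1 and 3 are formal, consisting of one integration, one invariance, and the integration-by-parts bookkeeping already carried out in the text.
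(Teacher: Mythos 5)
Your proposal is correct and follows essentially the same route as the paper's (outlined) proof: establish that $f$ and both periodic limits $f_\pm$ satisfy \eqref{eq:4th_ord_ode} and \eqref{eq:energy_integral} with the same constants $c$, $A$, $H$, then average over a period and subtract to eliminate $A$ and $H$. Your Step 2 — using the $C^5$ convergence plus the fact that a constant (or periodic function) tending to zero at infinity must vanish — is exactly the right way to make rigorous the paper's asserted "crucial observation" that $A$ and $H$ coincide on both orbits.
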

The proof, outlined above, is similar to that found in \cite{sprenger_discontinuous_2020}. 

\subsection{Solving the Jump Conditions}\label{jump_sols}
The jump conditions \eqref{jump_conditions} form a  system of three nonlinear equations in the six unknowns $\ubar_\pm, a_\pm, k_\pm,$ corresponding to the periodic solutions $f_\pm(\xi)$ of \eqref{eq:profile_eq}. There is a three-dimensional set of trivial solutions corresponding to $f_+=f_-,$ and we seek non-trivial solutions. To do so, we reduce the set of unknowns by first appealing to the Galilean symmetry \eqref{eq:galilean} of the Kawahara equation to take $\ubar_+=0.$ The set of five parameters $(\ubar_-,a_-,k_-,a_+,k_+)$ corresponding to non-trivial solutions of  the three equations \eqref{jump_conditions} with $\ubar_+=0$ consists of two-dimensional manifolds  in $\R^5.$ These surfaces bifurcate from the two-dimensional trivial solution 
$\ubar_-=0, a_+=a_-,$ and $k_+=k_-$, corresponding to $\varphi_+=\varphi_-.$ Let's consider parameterizing the two-dimensional surface of non-trivial solutions by $\ubar_-, a_+.$ Then the set of trivial solutions (for which $\ubar_-=0$) is represented by a plane $T$ in $\R^3:$
\begin{equation}\label{trivial1}
T=\{(a_-,k_-,k_+)=(a_+,k,k), a_+>0, k>0\}.
\end{equation}
 In what follows, we determine parts of the two-dimensional manifolds bifurcating from the trivial solution, beginning by determining curves in the plane of trivial solutions at which the Jacobian of the jump conditions, with respect to $(a_-,k_-,k_+)$ is singular. These curves contain  bifurcation points, at which the implicit function theorem fails to establish that the trivial solutions are the only nearby solutions. We begin out study of the jump conditions in the weakly nonlinear limit, where the averages in the jump conditions are explicit, and then utilize the library of numerically computed solutions to find branches of nontrivial solutions. 

\subsubsection{Weakly nonlinear regime}
\label{sec:bif_points}

We begin by studying the jump conditions in an explicit form in which the left and right periodic orbits are well approximated by the Stokes wave approximation \eqref{eq:stokeswave}.  In this weakly nonlinear regime, $a \ll 1$, 
 averages can be expressed in powers of the amplitude parameter $a$ and wavenumber $k$ 
\begin{align*}
    \overline{\varphi^2} &= \ubar^2 + \frac{a^2}{8} + \ldots \qquad 
   \overline{\varphi^3} = \ubar^3 + \frac{3}{8} \ubar a^2 +\ldots  \qquad
    \overline{\varphi_\theta^2} = \frac{a^2}{8} + \ldots \qquad
    \overline{\varphi_{\theta\theta}^2}  = \frac{a^2}{8} + \ldots \\
\end{align*}
Inserting these expansions into the jump conditions \eqref{jump_conditions}, setting $\ubar_+=0,$ and retaining terms up to  $\mathcal(O)(a^2)$ we have
 {
\begin{subequations}\label{eq:weaklyNL_jump_conds}
\begin{align}
-c_+ \ubar_- + \left(\frac{1}{2}\ubar_-^2 + \frac{a_-^2}{16} - \frac{a_+^2}{16} \right) & = 0 
\\
-c_+ \left(\frac{1}{2}\ubar_-^2 + \frac{a_-^2}{16} - \frac{a_+^2}{16}\right) + \left(\frac{1}{3}\ubar_-^3 + \frac{1	}{8}\ubar_-a_-^2 -\frac{3}{16}\alpha a_-^2 k_-^2 + \frac{5}{16}k_-^4 a_-^2  + \frac{3}{16}\alpha a_+^2 k_+^2 - \frac{5}{16}k_+^4 a_+^2 \right) &= 0 
\\
c_+-c_- & = 0, 
\end{align}
\end{subequations}
}
where $c_-$ and $c_+$ are the  velocities of the periodic orbits in the far-field given in  \eqref{eq:stokeswave}
\begin{equation}\label{speeds2}
c_\pm= \ubar_\pm  - \alpha k_\pm^2 + k_\pm^4 - \frac{a_\pm^2}{96 k_\pm^2(5k_\pm^2- \alpha )} + \mathcal{O}(a^3). 
\end{equation}
With the change of variables, 
\begin{align}\label{eq:COV}
\mu &=\frac 12(k_- - k_+) \quad \zeta  = \frac 12(k_- + k_+), 
\end{align}
the trivial solution becomes $u_- = 0, \ a_- = a_+,  \ \mu=0,$ for arbitrary values of the bifurcation parameter $\zeta.$ Then bifurcation points are values of $\zeta$ for which the equations, linearized with respect to 
the trivial solution, are singular. These points will depend on $a_+,$ so we obtain curves of bifurcation points $\zeta=\zeta_j(a_+).$  
The  jump conditions \eqref{eq:weaklyNL_jump_conds} are now
 
{\footnotesize
\begin{subequations}\label{eq:jump_conds_newvars}
\begin{align}
-c_+ \left(\ubar_-\right) + \left(\frac{1}{2}\ubar_-^2 + \frac{a_-^2}{16} - \frac{a_+^2}{16} \right) & = 0\\
-c_+\left(\frac{1}{2}\ubar_-^2 + \frac{a_-^2}{16} - \frac{a_+^2}{16}\right) + \frac{1}{3}\ubar_-^3 + \frac{1	}{8}\ubar_-a_-^2 -\frac{3}{16} \alpha  a_-^2 (\zeta +\mu )^2+\frac{5}{16} a_-^2 (\zeta +\mu )^4+\frac{3}{16} \alpha  a_+^2 (\zeta -\mu )^2-\frac{5}{16} a_+^2 (\zeta -\mu )^4 &= 0 \\
c_+ - c_- &= 0,  
\end{align}
\end{subequations}
}
 
The jump conditions are now three equations for the variables $\mathbf{z} = (\ubar_-,a_-,\mu)$ that take the form $G(\mathbf{z};a_+,\zeta) = 0$, where $G: \mathbb{R}^3 \times \mathbb{R}^2 \to \mathbb{R}^3$. We seek bifurcations from trivial solutions for a fixed value of $a_+$. Here, bifurcation points are values of $\zeta$ for which the equations linearized about the trivial solution $\mathbf{z}_0 = [0,a_+,0]$ are singular. Since $\mu = 0$, bifurcation points are identified by the wavenumber parameter $k_- = k_+ = \zeta$. Bifurcation points of the nonlinear system \eqref{eq:weaklyNL_jump_conds} therefore correspond to values of $\zeta$ for which the $3 \times 3$ Jacobian matrix, $dG(\mathbf{z}_0;a_+,\zeta)$  is singular. We treat $\zeta = \frac 12(k_- + k_+)$ as a bifurcation parameter, which appears nonlinearly. In the related scenario, where jump conditions arise in the context of discontinuous shock solutions of systems of conservation laws, the bifurcation parameter is the velocity, which appears linearly and is an eigenvalue of the jump conditions linearized about the trivial solution. For a $3\times 3$ system,  this then allows three distinct bifurcation points to be identified. The nonlinear dependence on the bifurcation parameter here could in principle result in more (or fewer) bifurcation points for fixed $a_+$. Varying the parameter $a_+$ results in curves of bifurcation points, $\zeta = k_j(a_+)$. These curves obey the amplitude scaling \eqref{scale1} when $\alpha = 0$, but are more complicated when $\alpha = \pm 1$.

In Figure~\ref{fig:bif_points_weak_NL}(a)-(c) ($\alpha=\pm1, 0$), we plot values of the pair $(\zeta=k,a)$ corresponding to zero-mean periodic waves at which the linearized jump conditions are singular, indicating  bifurcation to nontrivial solutions of the jump conditions \eqref{eq:jump_conds_newvars}. Figure~\ref{fig:bif_points_weak_NL}(a) ($\alpha = -1$) suggests that for sufficiently large   amplitudes, there are three distinct branches of nontrivial shock solutions,   two of which coalesce at a finite amplitude. In Figure~\ref{fig:bif_points_weak_NL}(b) ($\alpha = +1$), there are four distinct curves of bifurcation points, and a fifth appears for long waves   at small amplitudes. The four curves coalescing at $a = 0$ meet at $k = \sqrt{1/5}$, which is the first wavenumber that satisfies the resonance condition \eqref{eq:resonance}. Finally, in Figure~\ref{fig:bif_points_weak_NL}(c) ($\alpha = 0$), there are three bifurcation curves, on each of which $a$ is proportional to $k_\pm^4$, due to the scale invariance \eqref{scale1} of the KdV5 equation. The   bifurcation curves for $\alpha=0$ are consistent with previous results \cite{sprenger_discontinuous_2020}.

 In Figure \ref{fig:bif_points_weak_NL}(d-f), we plot example solutions of equations \eqref{eq:jump_conds_newvars} for sample fixed values of $a_+$ indicated on the  curves of bifurcation points in Figure \ref{fig:bif_points_weak_NL}(a-c). Since the Stokes expansion is only valid for $0 < a \ll 1$, then the solutions of the jump conditions may be misleading for the values of $a_+$ used here, which are of $\mathcal{O}(1)$. In the following section, we compute the bifurcation points and nontrivial solutions of the jump conditions found using the computed library of numerical solutions so that we may directly compare.
\begin{figure}[h!]
\begin{center}
\includegraphics[scale=0.15,page = 1]{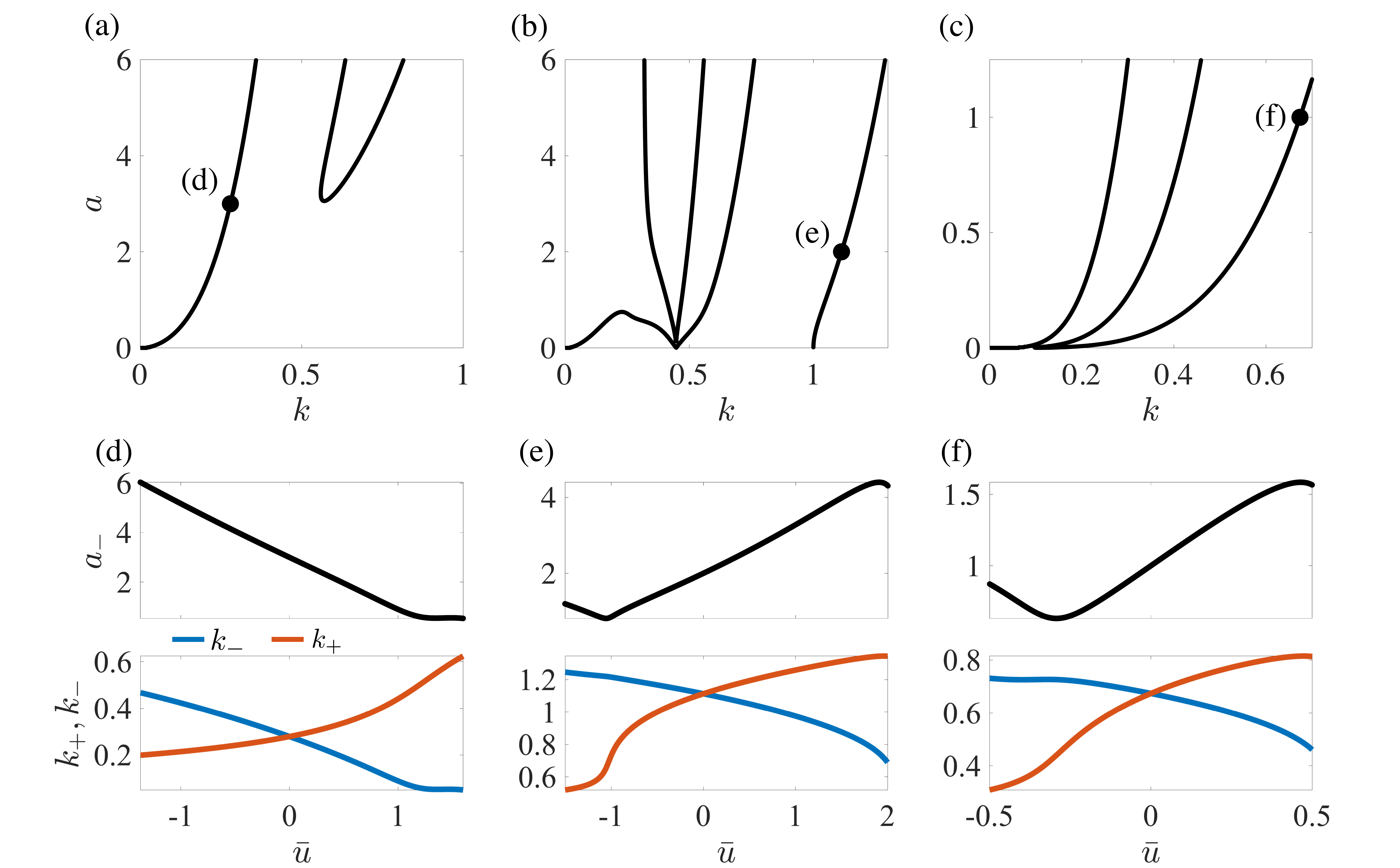}
\caption{Curves  of   bifurcation points of the weakly nonlinear jump conditions \eqref{eq:jump_conds_newvars} with (a) $\alpha = -1$, $a_+= 3$, (b) $\alpha = +1$, $a_+ = 2$, (c) $\alpha = 0$, $a_+ = 1$. (d) - (f):  nontrivial solutions of the jump conditions bifurcating from bifurcation points labeled (d)-(f).}
\label{fig:bif_points_weak_NL}
\end{center}
\end{figure}
Although the bifurcation points and curves of Figure~\ref{fig:bif_points_weak_NL}  are readily calculated, they contain only limited information, that becomes inaccurate for larger values of $a$ or smaller values of $k.$ However, these calculations demonstrate the connection between curves of parameters  representing bifurcation points, and the curves of parameters representing periodic solutions satisfying the jump conditions, if only approximately in the weakly nonlinear regime.

\subsubsection{Numerical computation of solutions of the full jump conditions}
\label{sec:numerical_shock} 
  Recall that $f(\xi)=\varphi(k\xi)$ in \eqref{jump_conditions}  and we take $\ubar_+=0$ without loss of generality. With the change of variables \eqref{eq:COV} the full jump conditions \eqref{jump_conditions}  become
{\scriptsize\begin{subequations}\label{eq:jump_conds_mu_zeta}
\begin{align}
-c \left(\ubar_-\right) +  \left(\frac{1}{2}\overline{\varphi_-^2}-\frac{1}{2}\overline{\varphi_+^2}\right) & = 0,\\
    -\frac{c}{2}\left(\overline{\varphi_-^2} - \overline{\varphi_+^2} \right) +
    \frac{1}{3}\left(\overline{\varphi_-^3} - \overline{\varphi_+^3} \right) - \alpha \frac{3}{2}\left((\zeta+\mu)^2 \overline{(\varphi_{-,\theta})^2} - (\zeta-\mu) ^2\overline{(\varphi_{+,\theta})'^2}\right)
  +   \frac{5}{2}\left((\zeta + \mu)^4 \overline{(\varphi_{-,\theta\theta})^2} - (\zeta-\mu)^4
    \overline{(\varphi_{+,\theta\theta})^2} \right) & = 0,\\
c_+ - c_- & = 0, 
\end{align}
\end{subequations}}

\noindent where $c_\pm=c(\ubar_\pm,a_\pm, k_\pm)=c.$ Here, we have used the change of variables in the integral, so that we average powers of $2\pi$-periodic functions $\varphi_\pm(\theta;\ubar_\pm,a_\pm,k_\pm)$ and their derivatives.  The jump conditions are trivially satisfied for any value of $a_+$ and $\zeta$ if $(\ubar_-,a_-,\mu) = (0,a_+,0),$ the trivial solution. As in the previous section, we find curves of bifurcation points at  values of the parameters $a_+, \zeta$ at which the Jacobian matrix of the nonlinear system \eqref{eq:jump_conds_mu_zeta} is singular, determined  by taking gradients of the equations with respect to the parameters $(\ubar_-,a_-,\mu)$ at the trivial soluton. 
 
Differentiating and evaluating at the trivial solution gives 
\begin{align}
\begin{split}
d\mathcal{G} &= \begin{bmatrix}
-c_+  & \half \frac{\partial \overline{\varphi_-^2}}{\partial a_-} & \half \frac{\partial \overline{\varphi_-^2}}{\partial \mu } \\
g_{21} & g_{22} & g_{23}
\\
-1 & \frac{\partial c_-}{\partial a_-} & \frac{\partial c_+}{\partial \mu}-\frac{\partial c_-}{\partial \mu}
\end{bmatrix}, \\ 
g_{21} & = -\frac{c_+}{2} \frac{\partial \overline{\varphi_-^2}}{\partial \ubar_-} + \frac{1}{3}\frac{\partial \overline{\varphi_-^3}}{\partial \ubar_-} \\ 
g_{22} & = -\frac{c_+}{2} \frac{\partial \overline{\varphi_-^2}}{\partial a_-} + \frac{1}{3}\frac{\partial \overline{\varphi_-^3}}{\partial a_-} - \alpha \frac{3}{2} \zeta^2 \frac{\partial \overline{\varphi_{-,\theta}^2}}{\partial a_-} + \frac{5}{2} \zeta^4\frac{\partial \overline{\varphi_{-,\theta\theta}^2}}{\partial a_-}\\ 
g_{23} & = - \frac{c_+}{2} \left(\frac{\partial \overline{\varphi_-^2}}{\partial \mu}-\frac{\partial \overline{\varphi_+^2}}{\partial \mu}\right) + \frac{1}{3}\left(\frac{\partial \overline{\varphi_-^3}}{\partial \mu}-\frac{\partial \overline{\varphi_+^3}}{\partial \mu}\right)- \alpha \frac{3}{2} \zeta^2\left(\frac{\partial \overline{\varphi_{-,\theta}^2}}{\partial \mu}-\frac{\partial \overline{\varphi_{+,\theta}^2}}{\partial \mu} \right)\\ & \quad  + \frac{5}{2} \zeta^4 \left(\frac{\partial \overline{\varphi_{-,\theta\theta}^2}}{\partial \mu}-\frac{\partial \overline{\varphi_{+,\theta\theta}^2}}{\partial \mu} \right)
\end{split}
\end{align}
The Jacobian is simplified since $\varphi_- = \varphi_+$ at the trivial solution. We can  represent the single periodic orbit in the far-field by $\tilde{\varphi}$ with parameters $\tilde{\ubar} = 0$, $\tilde{a} = a_+$, and $\tilde{k} = \zeta$ and the corresponding velocity $\tilde{c}$. Gradients are then taken with respect to the wave parameters $\tilde{\ubar}$ and $\tilde{a}$
\begin{align*}
\frac{\partial}{\partial \ubar_-} \to \frac{\partial}{\partial \tilde{\ubar}}, \qquad 
\frac{\partial}{\partial a_-} \to \frac{\partial}{\partial \tilde{a}}, 
\end{align*}
and gradients with respect to $\mu$ become
\begin{align*}
\frac{\partial}{\partial \mu} \overline{\left(\partial^n_\theta \varphi_-\right)^m} - \frac{\partial}{\partial \mu} \overline{\left(\partial^n_\theta \varphi_+\right)^m} \to 2 \frac{\partial}{\partial \tilde{k}} \overline{\left(\tilde{\partial}^n_\theta \varphi_+\right)^m}. 
\end{align*}  
The Jacobian at the trivial solution is then 
\begin{align}\label{eq:jacobian_new_vars}
\begin{split}
d\mathcal{G} &= \begin{bmatrix}
-\tilde{c}  & \half \frac{\partial \overline{\tilde{\varphi}^2}}{\partial \tilde{a}} & \half \frac{\partial \overline{\tilde{\varphi}^2}}{\partial \tilde{k}} \\
\tilde{g}_{21} & \tilde{g}_{22} & \tilde{g}_{23}
\\
-1 & \frac{\partial \tilde{c}}{\partial \tilde{a}} & 2\frac{\partial \tilde{c}}{\partial \tilde{k}}
\end{bmatrix}, \\ 
\tilde{g}_{21}  &= -\frac{\tilde{c}}{2} \frac{\partial \overline{\tilde{\varphi}^2}}{\partial \tilde{\ubar}} + \frac{1}{3}\frac{\partial \overline{\tilde{\varphi}^3}}{\partial \tilde{\ubar}},  \\ 
\ \ \tilde{g}_{22} & = -\frac{\tilde{c}}{2} \frac{\partial \overline{\tilde{\varphi}^2}}{\partial \tilde{a}} + \frac{1}{3}\frac{\partial \overline{\tilde{\varphi}^3}}{\partial \tilde{a}} - \alpha \frac{3}{2} \tilde{k}^2 \frac{\partial \overline{\tilde{\varphi}_{\theta}^2}}{\partial \tilde{a}}\nonumber + \frac{5}{2} \tilde{k}^4\frac{\partial \overline{\tilde{\varphi}_{\theta\theta}^2}}{\partial \tilde{a}} \\ 
\tilde{g}_{23} & = - \tilde{c} \frac{\partial \overline{\tilde{\varphi}^2}}{\partial \tilde{k}} + \frac{2}{3}\left(\frac{\partial \overline{\tilde{\varphi}^3}}{\partial \tilde{k}}\right)- 3 \alpha \tilde{k}^2 \frac{\partial \overline{\tilde{\varphi}_{\theta}^2}}{\partial \tilde{k}} + 5 \tilde{k}^4 \frac{\partial \overline{\tilde{\varphi}_{\theta\theta}^2}}{\partial \tilde{k}}. 
\end{split}
\end{align}
For fixed values of $\tilde{a}$, values of $\tilde{k}$ are computed for which which the Jacobian \eqref{eq:jacobian_new_vars} is singular. To compute such points, approximations of gradients with respect to parameters $\tilde{\ubar}$, $\tilde{a}$, and $\tilde{k}$ are required. Gradients with respect to $\tilde{\ubar}$ are computed explicitly using the identities 
\begin{align*}
\overline{\tilde{\varphi}^2} &= \tilde{\ubar}^2 + \overline{\varphi(\theta;0,\tilde{a},\tilde{k})^2},
&\overline{\tilde{\varphi}^3} &= \tilde{\ubar}^3 + 3\tilde{\ubar}\overline{\varphi(\theta;0,\tilde{a},\tilde{k})^2} + \overline{\varphi(\theta;0,\tilde{a},\tilde{k})^3} ,\\ 
\overline{\tilde{\varphi}_\theta^2} &= \overline{\varphi_\theta(\theta;0,\tilde{a},\tilde{k})^2}, 
&\overline{\tilde{\varphi}_{\theta\theta}^2}& = \overline{\varphi_{\theta\theta}(\theta;0,\tilde{a},\tilde{k})^2},
\end{align*} 
while gradients with respect to $\tilde{a}$ and $\tilde{k}$, on the other hand, must be approximated numerically. To do so, eighth order finite difference formulas are applied to the relevant averages of $\tilde{\varphi},$ computed using the library of 
periodic solutions discussed in \S \ref{sec:periodic_orbits-computations}. The coefficients of the finite difference formulas can be computed rapidly for the high-order finite difference schemes \cite{fornberg_generation_1988}.  Bifurcation points from trivial solutions of the jump conditions \eqref{jump_conditions} are then identified by parameter values for which the Jacobian matrix \eqref{eq:jacobian_new_vars} is singular.  In Figs. \ref{fig:bifurcation_num_alpha_m1}(a) ($\alpha = -1$) and \ref{fig:bifurcation_num_alpha_p1}(a) ($\alpha = +1$), bifurcation points are plotted with thick black curves. For both choices of $\alpha$, many families of bifurcations are identified. This a consequence of the nonlinear dependence on the parameter $\tilde{k}$ in the Jacobian matrix \eqref{eq:jacobian_new_vars}. 

The bifurcation points computed via the Jacobian are used as initial guesses in a simple numerical continuation routine that computes nontrivial solutions of the jump conditions \eqref{eq:jump_conds_newvars}. Three of the five far-field parameters in the jump conditions can be computed as a function of the remaining two. Using a parameterization similar to that found in  \cite{sprenger_discontinuous_2020}, we parameterize nontrivial solutions of the jump conditions by the left average $\ubar_-$  at $-\infty$, and the wave amplitude $a_+$ at $+\infty$. The results of the computation are two-dimensional  surfaces in $\R^3$  
\begin{align}
a_- = a_-(\ubar_-,a_+),\qquad k_-= k_-(\ubar_-,a_+), \qquad k_+ = k_+(\ubar_-,a_+),
\end{align}
a different surface bifurcating from each curve of bifurcation points, where $\ubar_-=0$ and $k_-=k_+.$ 
Examples of nontrivial solutions are displayed for fixed values of $a_+$ in Figs. \ref{fig:bifurcation_num_alpha_m1}(b)-(d) and \ref{fig:bifurcation_num_alpha_p1}(b)-(d). The upper panels of Figs. \ref{fig:bifurcation_num_alpha_m1}(b)-(d) and \ref{fig:bifurcation_num_alpha_p1}(b)-(d) show the left far-field amplitude, $a_-$ as a function of $\ubar_-$ and the lower panels show $k_-$ and $k_+$  as functions of $\ubar_-$ giving blue and red curves respectively. The continuation is terminated at values of $\ubar_-$ for which either $k_-$ or $k_+$ attain values less than $0.005$, which indicates one far-field periodic orbit is approaching a solitary wave; this case is discussed in detail in the following subsection. 

Figure \ref{fig:bifurcation_num_alpha_0}(a) shows the bifurcation points for the jump conditions of the KdV5 equation. The rightmost three curves of bifurcation points and the branches of nontrivial solutions were computed previously in \cite{sprenger_discontinuous_2020}, while the bifurcations curves that include the points labeled (b), (c) and (d) as well as the bifurcation curves to the left of these points were not previously identified. In Fig. \ref{fig:bifurcation_num_alpha_0}(b)-(d), example bifurcation curves are shown with $a_+ = 1$. 

\begin{figure}[h!]
\begin{center}
\includegraphics[scale=0.15,page = 2]{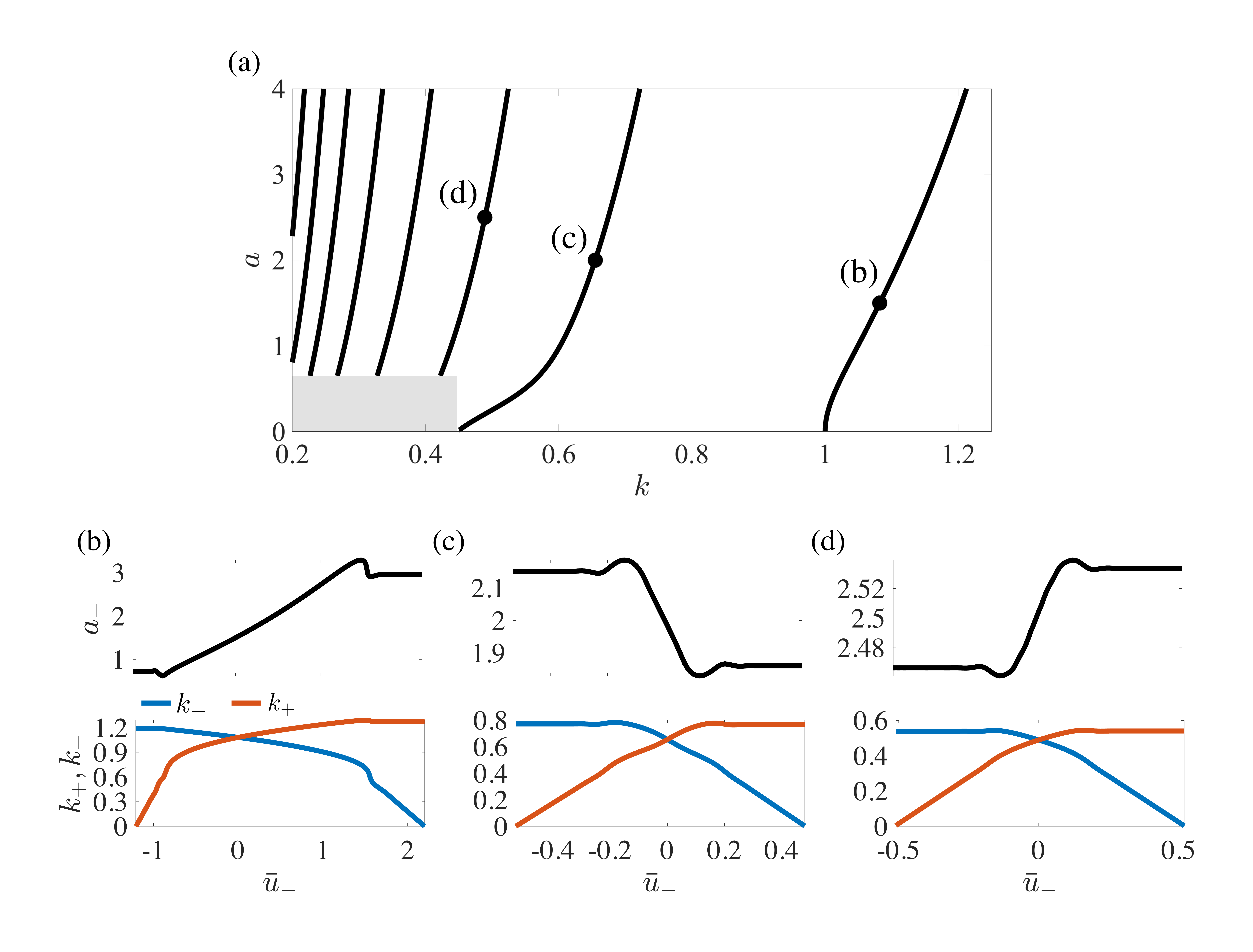}
\caption{($\alpha = -1$) (a) Bifurcation points of the jump conditions \eqref{eq:jump_conds_newvars}. (b)-(d) Wave parameters along the loci of nontrivial   solutions of the jump conditions, branching from bifurcation points (b-d) with fixed $a_+$. }
\label{fig:bifurcation_num_alpha_m1} 
\end{center}
\end{figure}

\begin{figure}[h!]
\begin{center}
\includegraphics[scale=0.15,page = 1]{bif_points_examples.pdf}
\caption{($\alpha = +1$) (a) Bifurcation points of the jump conditions \eqref{eq:jump_conds_newvars}. (b)-(d) Wave parameters along the loci of nontrivial   solutions of the jump conditions, branching from bifurcation points (b-d) with fixed $a_+$. }
\label{fig:bifurcation_num_alpha_p1}
\end{center}
\end{figure}

\begin{figure}[h!]
\begin{center}
\includegraphics[scale=0.15,page = 3]{bif_points_examples.pdf}
\caption{($\alpha = 0$) (a) Bifurcation points of the jump conditions \eqref{eq:jump_conds_newvars}. (b)-(d) Wave parameters along the loci of nontrivial   solutions of the jump conditions, branching from bifurcation points (b-d) with fixed $a_+$. }
\label{fig:bifurcation_num_alpha_0}
\end{center}
\end{figure}

\subsection{Equilibrium to periodic wave} 
\label{sec:eq_to_p}
The curves of solutions of the jump conditions \eqref{jump_conditions}, shown in Figures \ref{fig:bifurcation_num_alpha_m1}--\ref{fig:bifurcation_num_alpha_p1} have the property that there are finite  limits for the parameter   $\ubar_-$, where  one of  the wavenumbers $k_+$ or $k_-$ approaches zero. In these  limits, the jump conditions \eqref{jump_conditions} are simplified. To be specific, we take $k_- \to 0$, so the periodic solution $f_-(\xi)=\varphi_-(k_-\xi;\ubar_-,a_-,k_-)$ approaches  a  homoclinic orbit  in the phase portrait, and corresponds to a solitary wave solution of the Kawahara equation \eqref{eq:kawahara}. Suppose $k_-\to 0$ as $\ubar_-\to\ubar_s.$ Let $c_s$ denote the speed of a solitary wave with amplitude $a_s$ defined by 
\begin{equation}\label{cs-eqn0}
c_s=c_s(\ubar_s,a_s)=\lim_{\ubar_-\to\ubar_s}c(\ubar_-,a_-(\ubar_-),k_-(\ubar_-))\,.\end{equation}
  Then $c_s(\ubar_s,a_s)=\ubar_s+ c_s(0,a_s),$ where  $c_s(0,a_s)$ is the speed of a solitary wave on zero background, with amplitude $a_s,$ which can be calculated numerically, independently of the limit \eqref{cs-eqn0}.

As $k_- \to 0$, the limiting homoclinic orbit of $\varphi_-$ approaches  $\ubar_s$ along the stable and unstable manifolds. Consequently, as $k_-\to 0,$  the $2\pi$-periodic $\varphi_-(\theta)$ is close to the constant $\ubar_s$   for much of the interval $0\leq \theta \leq 2\pi$ and all derivatives of   $\varphi_-$  are uniformly bounded. Therefore,  in the limit $k_-\to 0,$ the averages become,
\begin{align}\label{eq:zero_k_lim}
\overline{\varphi_-^n} = \ubar_s^n, \qquad 
\overline{k_-^{nm} \left(\frac{\partial^n \varphi_-}{\partial \theta^n}\right)^m} = 0, \ \ \text{for all positive integers} \  n,m. 
\end{align}
Thus, the jump conditions \eqref{jump_conditions} in this limit become
\begin{subequations}\label{eq:jump_conds_eq_to_per}
\begin{align}
-c_s \ubar_s + \frac{1}{2} \ubar_s^2 - \frac{1}{2}\overline{\varphi_+^2} & = 0,  \label{eq:e_to_p1}\\ 
-\frac{c_s}{2} \left(\ubar_s^2 -\overline{ \varphi_+^2}\right) + \frac{1}{3}\ubar_s^3 -\frac{1}{3}\overline{\varphi_+^3} + \frac{3}{2}\alpha k_+^2 \overline{\varphi_{+,\theta}^2} - \frac{5}{2}k_+^4 \overline{\varphi^2_{+,\theta\theta}}  & = 0 \label{eq:e_to_p2}  \\
c_+ - c_s & = 0,\label{eq:e_to_p3}
\end{align}
\end{subequations}
in which $c_+=c(0,a_+,k_+).$ Note that the only dependence on $a_s$ is in \eqref{eq:e_to_p3}, through $c_s(\ubar_s,a_s)$. Thus, if we write $c_s=c(0,a_+,k_+)$
from that equation, the first two conditions are equations for the variables $(\ubar_s, a_+, k_+).$ The numerical strategy is to first solve these two equations for $a_+,k_+$ as functions of $\ubar_s,$ and then solve 
\begin{equation}\label{cs-eqn}
c_s=\ubar_s+c_s(0,a_s)=c(0,a_+(\ubar_s),k_+(\ubar_s))\,,
\end{equation}
 for $a_s$ as a function of $\ubar_s.$ Here, the amplitude-speed relation  $c_s=c_s(0,a_s)$ for solitary waves on a zero background is obtained numerically \cite{sprenger_shock_2017}. 

As in the previous section, this procedure is made more explicit when the periodic orbit, $\varphi_+,$ is approximated by the Stokes expansion \eqref{eq:stokeswave}. In this regime, the first two jump conditions \eqref{eq:e_to_p1} and \eqref{eq:e_to_p2}, to $\mathcal{O}(a_+^2)$, are 
 \begin{subequations}\label{eq:eq_to_stokes}
\begin{align}
-\left(- \alpha k_+^2 + k_+^4 - \frac{a_+^2}{480 k_+^4-96 \alpha  k_+^2}\right)\ubar_s + \frac{1}{2}\ubar_s^2 - \frac{a_+^2}{16} &= 0 \\
-\left(- \alpha k_+^2 + k_+^4 - \frac{a_+^2}{480 k_+^4-96 \alpha  k_+^2}\right) \left(\frac{1}{2}\ubar_s^2 - \frac{a_+^2}{16}\right) + \frac{1}{3}\ubar_s^3 + \frac{3}{16}\alpha  a_+^2 k_+^2 - \frac{5}{16}a_+^2 k_+^4  &= 0.
\end{align}
\end{subequations}
These equations can be solved comprehensively for $a_+$ and $k_+$ as functions of $\ubar_s.$ Then $a_s(\ubar_s)$ is calculated using (\ref{cs-eqn}). However, the results yield only limited information about the full jump conditions \eqref{eq:jump_conds_eq_to_per}.

Returning to the full jump conditions \eqref{eq:jump_conds_eq_to_per} for equilibrium-to-periodic TWs, we find numerous curves of solutions $(\ubar_s,a_s, k_+)$ for fixed $a_+,$ 
computed using numerical averages of the periodic orbits, as in the previous section.  We fix a value of $\ubar_s$ and solve for the remaining   wave parameters $a_+$, $a_s$ and $k_+,$ using the results from the periodic-to-periodic section \S\ref{sec:numerical_shock} in the limit as $k_-\to 0$  to extract  starting points for an iterative  solver.  Then a path following procedure that yields $(a_\pm,k_+)$ as a function of $\ubar_s.$  A corresponding procedure can be used to calculate $(a_\pm,k_-)$ as a function of  $\ubar_-$ with $k_+(\ubar_-)\to 0$ as $\ubar_-$ approaches the other limit of its range.  These computations give the wave parameters at the boundaries  of the solution manifolds satisfying the jump conditions \eqref{jump_conditions},   identified by either $k_+$ or $k_-$ tending to zero. These boundaries in  $\ubar_-$  are observed in panels (b), (c) and (d) of Figures \ref{fig:bifurcation_num_alpha_m1}--\ref{fig:bifurcation_num_alpha_0} for solution curves bifurcating from corresponding points labeled (b), (c), (d) on the bifurcation curves, on which $\ubar_-=0, k_+=k_-=k, a_+=a_-=a.$ In the panels (b), (c), (d), $a_+$ is kept fixed, $u_-$ is varied between limiting values. As $\ubar_-\to \ubar_s, k_-\to 0,$ we can read off values of $k_+(\ubar_s)$ and $a_s=a_-(\ubar_s).$ These values depend also on $a_+.$ 
For each bifurcation curve in panel (a) of the figures \ref{fig:bifurcation_num_alpha_m1}--\ref{fig:bifurcation_num_alpha_p1}, there is a 2-dimensional surface of solutions of \eqref{jump_conditions}, parameterized by $\ubar_-, a_+$.  The edges of each such surface, where either $k_-\to 0$ or $k_+\to 0$ mark the curves where there is a solitary wave on the left or on the right. In the illustrative case $k_-\to 0,$ we obtain the curves $(a_s,a_+,k_+)(\ubar_s).$ Examples are plotted in Fig.\ref{fig:eq_to_p}.

Figure \ref{fig:eq_to_p}(a) ($\alpha = -1)$ shows the wave parameters $a_s,a_+,k_+$ as  functions of  $\ubar_s$. The curves represent the endpoints of the solutions that emerge from the rightmost set of bifurcation points in Figure \ref{fig:bifurcation_num_alpha_m1}(a). Panels (b) and (c) ($\alpha = +1$) are similar plots of the endpoint of the solutions emerging from the two rightmost   bifurcation points shown as (b), (c) in Figure \ref{fig:bifurcation_num_alpha_p1}(a). In panel (c), the inset for small values of $\ubar_s$ shows the solitary wave amplitude $a_s$ approaching sharply to zero.

\begin{figure}[h]
\begin{center}
\includegraphics[scale=0.66]{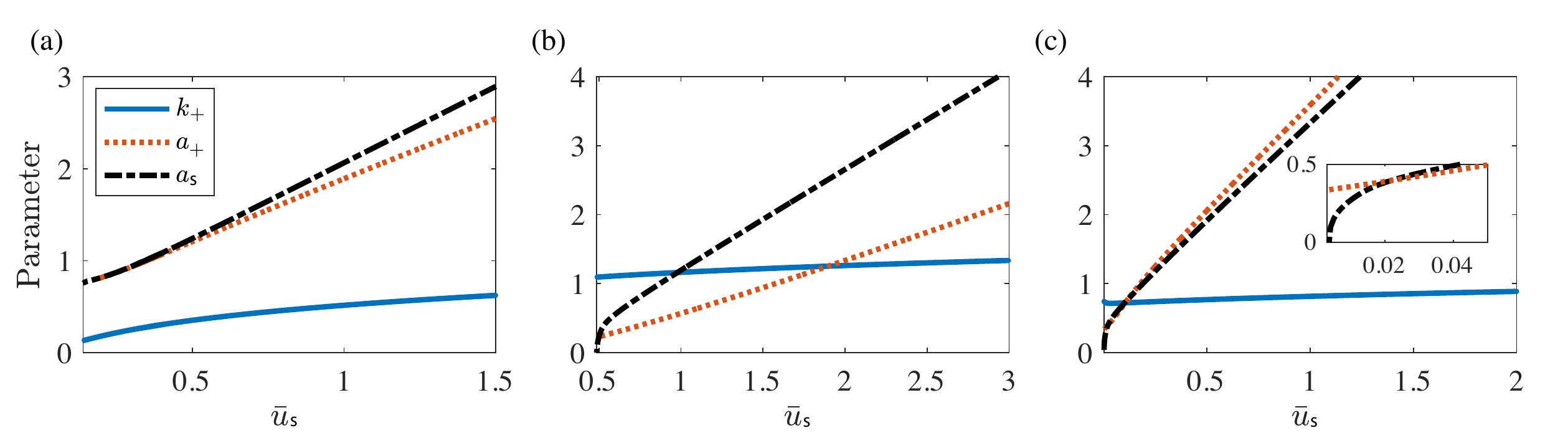}
\caption{Solutions of the jump conditions \eqref{eq:jump_conds_eq_to_per}.
(a) $\alpha = -1,$ from Fig.~\ref{fig:bifurcation_num_alpha_m1}(a).  (b),(c) $\alpha = +1,$ from Fig.~\ref{fig:bifurcation_num_alpha_p1}(a). }
\label{fig:eq_to_p}
\end{center}
\end{figure}

\subsubsection{Properties of solitary wave solutions}
\label{subsubsec-sws}

In order to ensure that solutions of the jump conditions \eqref{eq:jump_conds_eq_to_per} are necessary conditions for the existence of a traveling wave solutions of the Kawahara equation joining a solitary wave to a periodic solution, we identify parameter ranges in which we are guaranteed a solitary wave solution on the background $\ubar_s.$
For this we appeal to results on the existence
of homoclinic orbits in \cite{buffoni_bifurcation_1996}.

In \cite{buffoni_bifurcation_1996}, the authors examine homoclinic orbits of the equation
\beq\label{buffoni1}
u-u^2+Pu''+u^{(4)}=0.
\eeq
This equation has a unique homoclinic orbit for each $P\leq -2,$ and the authors compute multiple homoclinic orbits numerically for $-2<P<2.$     For $P<2,$ each orbit in $\R^4$ is the intersection of 2-dimensional stable and unstable manifolds     of the equilibrium $u=0.$ Linearizing the  equation \eqref{buffoni1} about the  equilibrium $\ubar=0,$ we find the characteristic equation $\lambda^4+P\lambda^2+1=0$ has four distinct real solutions $\pm\lambda_j, j=1,2,$ with    $0<\lambda_1<\lambda_2$ when $P<-2.$ For $-2<P<2,$ the solutions are complex, 
$\pm(\lambda \pm i\mu),$ and for $P>2,$ all four solutions $\pm i\mu_j, j=1,2$ are imaginary and distinct.  Equation \eqref{buffoni1} is equivalent to equation \eqref{eq:4th_ord_ode} with $A=0,$ corresponding to $\ubar_s=0$ background:
\beq\label{ode4}
-cf+\frac 12 f^2+\alpha f''+f^{(4)}=0.
\eeq
The equivalence follows by writing $\alpha =P b^{-2}, c=-b^{-4}, f=-\frac 12b^4u,$ where,  $b $ is the scaling between the independent variables $t,\xi$ of the ODEs \eqref{buffoni1} and \eqref{ode4} respectively: $\xi= bt.$ The existence of orbits homoclinic to zero for equation \eqref{ode4} with $\alpha=0,\pm 1$ can be summarized as follows, with reference to the corresponding results of Buffoni et al. \cite{buffoni_bifurcation_1996} and Amick and Toland \cite{amick_homoclinic_1992} in relation to equation \eqref{buffoni1}.  
\begin{enumerate}
\item $\alpha=-1\ (P<0):$  There is a unique symmetric homoclinic orbit for $-\frac 14 \leq c < 0 \ (P\leq -2),$ and at least one symmetric homoclinic orbit for each $c<-\frac 14 \ (-2<P<0).$
\item $\alpha=1\ (0<P<2):$ There is at least one symmetric homoclinic orbit for each $c<-\frac 14.$
\item $\alpha=0\ (P=0, b> 0 \ \mbox{arbitrary}):$\  There is at least one symmetric homoclinic orbit for all $c<0.$
\end{enumerate}
The statement in case 1 ($\alpha=-1$) is proved  in  \cite{amick_homoclinic_1992}; the existence of multiple homoclinic solutions  is explored numerically in \cite{buffoni_bifurcation_1996} in all three cases with $-2<P<2,$ including bifurcations between symmetric and asymmetric  homoclinic orbits as $P$ is varied. In our case,   the solitary wave is the limit as $k_-\to 0$ of periodic solutions, followed along a branch of solutions of wave parameters satisfying the jump conditions. The solitary wave speed $c_s=c_s(\ubar_s,a_s)=\ubar_s+c_s(0,a_s)$ depends on the background constant $\ubar_s$ and amplitude $a_s.$  
 
We observe in numerical results only solitary waves for which $P\in (-2,2).$ For $\alpha=+1,$ we necessarily have $P\in (0,2),$ so that $c_s<-\frac 14+\ubar_s.$ For    $\alpha=0, $ we find $c_s<\ubar_s,$ but since $P=0,$ the eigenvalues of the equilibrium $f=\ubar_s$ are complex.   However, for $\alpha= -1,$ the wave speed $c_s$ of solitary wave solutions of \eqref{ode4} may be either side of $-\frac 14+\ubar_s.$ In this case, numerical results show quite clearly that if the solitary wave is connected to a periodic solution, thus satisfying the jump conditions  \eqref{eq:jump_conds_eq_to_per}, then $c_s$ is limited to $c_s<-\frac 14+\ubar_s.$   These observations point to the interesting conjecture that if there is a traveling wave joining a solitary wave with background $\ubar_s$ to a periodic wave $\varphi_+$ (so the jump conditions \eqref{eq:jump_conds_eq_to_per} are satisfied),  then the eigenvalues of the  equilibrium   $f=\ubar_s$ of \eqref{eq:4th_ord_ode}  are two complex conjugate pairs. 
 
In the absence of a rigorous proof of the conjecture in case $\alpha=-1,$ we provide numerical evidence as follows. First, a periodic solution  $f$ of equation \eqref{ode4} 
with mean  $\overline{f}=0$ is found numerically for each amplitude $a$ and wavenumber $k,$ with $c=\tilde{c}(a,k).$ The procedure is described in Appendix~A.    Suppose a traveling wave solution of \eqref{ode4}  connects a periodic wave to an equilibrium  $\ubar_s.$ Then the corresponding parameters $a,k,\ubar_s$ satisfy the jump conditions \eqref{eq:jump_conds_eq_to_per}. Note that there is no assumption that $\ubar_s$ is the background for a solitary wave. However, since  the constant of integration $A=0$ in  \eqref{ode4},  we have  $\ubar_s = 0,$ and consequently  the pair $(a,k)$  lies in the zero set of the Hamiltonian: $H(a,k) =0$. The traveling waves joining different periodic solutions with zero Hamiltonian are indicated by pairs of intersections of level curves of $\tilde{c}(a,k) $ with the zero set of $H(a,k).$ In particular,  
the equation $H(a,0+)=0$ gives values of $a$ for which there is a solitary wave with amplitude $a=a_s$ and background $\ubar_s=0.$  Furthermore, the entire $a$ axis is accessible since $H=0$ for any solitary wave on background $\ubar_s =0$. From Figure~\ref{fig:hamil_zero_constant}, we observe that a level curve of $c(a,k)$ may fail to intersect any of the curves representing the zero set of $H(a,k).$ Indeed, we have shown in the figure that the curve $c(a,k)=-\frac 14$ fails to intersect the curves $H(a,k)=0, $ except asymptotically at $k=0.$ Representative level curves with velocities $c = -0.5 $ and $c = -0.1$ are also shown in the figure. We conclude from these calculations that if $\ubar_s=0$ is connected by a traveling wave to a periodic solution, then the wave speed $c$ satisfies $c<-\frac 14.$ The general case is then established by the following proposition:

\begin{proposition}
Suppose that every traveling wave joining the equilibrium $\ubar_s=0$ to a periodic wave solution $f(\xi)$ of equation \eqref{ode4} has speed $c_0<-\frac14,$ as established with numerical results reported in Figure~\ref{fig:hamil_zero_constant}. Then traveling waves joining  any equilibrium $\ubar_s\in \R$  to a periodic wave solution necessarily have speed $c_s<-\frac14+\ubar_s.$  
\end{proposition}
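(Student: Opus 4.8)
The plan is to deduce the general statement from the normalized case $\ubar_s=0$ by exploiting the Galilean invariance of the Kawahara equation recorded in Lemma~\ref{Lemma_1}(1). Suppose $u(x,t)=f(x-c_st)$ is a traveling wave of \eqref{eq:kawahara} whose profile $f$ joins the constant state $\ubar_s$ at one spatial infinity and a nonconstant periodic wave at the other. Applying Lemma~\ref{Lemma_1}(1) with $b=-\ubar_s$ produces the traveling wave $\tilde u(x,t)=f(x-\tilde c t)-\ubar_s$ of \eqref{eq:kawahara}, with speed $\tilde c=c_s-\ubar_s$, whose profile $\tilde f:=f-\ubar_s$ joins the constant state $0$ to the vertical translate of the original periodic wave.

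The key remaining step is to verify that $\tilde f$ is a traveling wave of the normalized equation \eqref{ode4}, so that the hypothesis applies verbatim. As a traveling wave profile, $\tilde f$ satisfies \eqref{eq:profile_eq}; integrating once shows it satisfies \eqref{eq:4th_ord_ode} with speed $\tilde c$ and some constant of integration $\tilde A$. Since $\tilde f(\xi)\to 0$ in $C^5$ at one spatial infinity, passing to the limit in \eqref{eq:4th_ord_ode} forces $\tilde A=0$, so $\tilde f$ satisfies \eqref{ode4} (this is exactly the normalization effected by Lemma~\ref{Lemma_1}(2)). The periodic end state of $\tilde f$ is likewise a solution of \eqref{ode4}, by the same $C^5$-limit argument used in deriving the jump conditions. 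Hence $\tilde f$ is a traveling wave of \eqref{ode4} joining the equilibrium $0$ to a periodic wave, and the hypothesis forces $\tilde c<-\frac14$. Substituting $\tilde c=c_s-\ubar_s$ gives $c_s<-\frac14+\ubar_s$, which is the claim.

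There is no essential obstacle here: the argument is just the observation that the property ``speed $<-\frac14$ at zero background'' is covariant under the Galilean symmetry. The only points needing care are bookkeeping — confirming that the vertical shift lands in the normalized form \eqref{ode4} (i.e.\ that the shifted equilibrium $0$ forces $\tilde A=0$), and that the shifted periodic end state remains a genuine nonconstant periodic solution of \eqref{ode4} — and both are immediate, since vertical translation preserves period and amplitude and maps \eqref{eq:4th_ord_ode} into itself with a correspondingly shifted speed. Consequently the proposition reduces entirely to the numerically established bound of Figure~\ref{fig:hamil_zero_constant}, which carries all the analytical weight.
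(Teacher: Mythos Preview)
Your proof is correct and follows essentially the same route as the paper: both reduce to the zero-background case by the vertical shift $\tilde f=f-\ubar_s$, $\tilde c=c_s-\ubar_s$, then observe that $\tilde f\to 0$ forces the integration constant to vanish so that $\tilde f$ solves \eqref{ode4} and the hypothesis applies. The only cosmetic difference is that you invoke Lemma~\ref{Lemma_1}(1) (Galilean invariance of the PDE) and then deduce $\tilde A=0$ from the limit, whereas the paper invokes Lemma~\ref{Lemma_1}(2) directly at the ODE level; the resulting transformation and conclusion are identical.
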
\

\begin{proof} Consider $\ubar_s\in
 \R,$ and suppose a traveling wave solution $f(\xi)$ with speed $c$ is asymptotic to $\ubar_s$ as $\xi\to -\infty$, and to a $2\pi/k$-periodic traveling wave $f_+(\xi)$ as $\xi\to +\infty.$ Then  $\ubar_s,k,c$  and $\varphi_+(\theta)=f_+(\theta/k)$ satisfy the jump conditions  \eqref{eq:jump_conds_eq_to_per}. Let 
 $$
 A=-c\ubar_s+\frac 12 \ubar_s^2.
 $$ 
 Then $\ubar_s=c+\sqrt{c^2+2A}$ (plus sign chosen so that $\ubar_s=0$ when $A=0$ and $c<-\frac14$), and $f_+(\xi)$ satisfies equation  \eqref{eq:4th_ord_ode} with constant of integration $A.$   Next, we use property (2) of Lemma~\ref{Lemma_1} to change the constant of integration in \eqref{eq:4th_ord_ode} to zero.   Specifically, let 
 $$
 \tilde{c}=-\sqrt{c^2+2A}, \   \ \tilde{f}=f-\ubar_s, \ \ \mbox{and} \ \ \tilde{f}_+=f_+-\ubar_s.
 $$ 
 Then $\tilde{f}, \tilde{f}_+$  satisfy \eqref{eq:4th_ord_ode}  with constant of integration $A=0$ and speed $\tilde{c}.$ Moreover, $\lim_{\xi\to -\infty} \tilde{f}(\xi)=0.$
 Thus, by hypothesis,
 $\tilde{c}<-\frac14.$ But  $\ubar_s=c-\tilde{c},$ so that $c<-\frac 14 +\ubar_s,$ as claimed. 
 \end{proof}

The level curves $H(a,k)=0$ shown in Figure~\ref{fig:hamil_zero_constant} are calculated numerically. In the figure, the numerical value of the Hamiltonian is below double machine-precision when the wavelength of the periodic orbit is sufficiently large, so the contours cannot be reliably computed for small values of $k$. To overcome this difficulty, we extrapolate contour lines for small values of $k.$ To extrapolate the contours, we assume that at $k = 0$, the curves meet at the amplitude of the solitary wave with velocity $c = -1/4$. Each contour is a curve $a = a(k)$ such that $H(a(k),k)=0.$ We observe that  $a(k)$ is exponential in $k$ sufficiently far away from $k = 0.$ To extrapolate from the computed portion of the graph of $a(k),$ we fit a cubic spline to the data, with an additional data point $k=0, \ a \approx 0.72718.$
This value of $a$ is the approximate amplitude of a numerically computed solitary wave with  velocity $c = -1/4.$ We also ensure that $a'(0)=0$ 
by using an even reflection of the data across the $a$-axis. In Figure \ref{fig:hamil_zero_constant}, we represent the numerically computed level set as solid black curves, and the extrapolated data as dashed curves. The inset shows the same curves near $k = 0$ with the horizontal axis on a logarithmic scale. 
 
 \begin{figure}[h]
\begin{center}
 \includegraphics[scale=0.4]{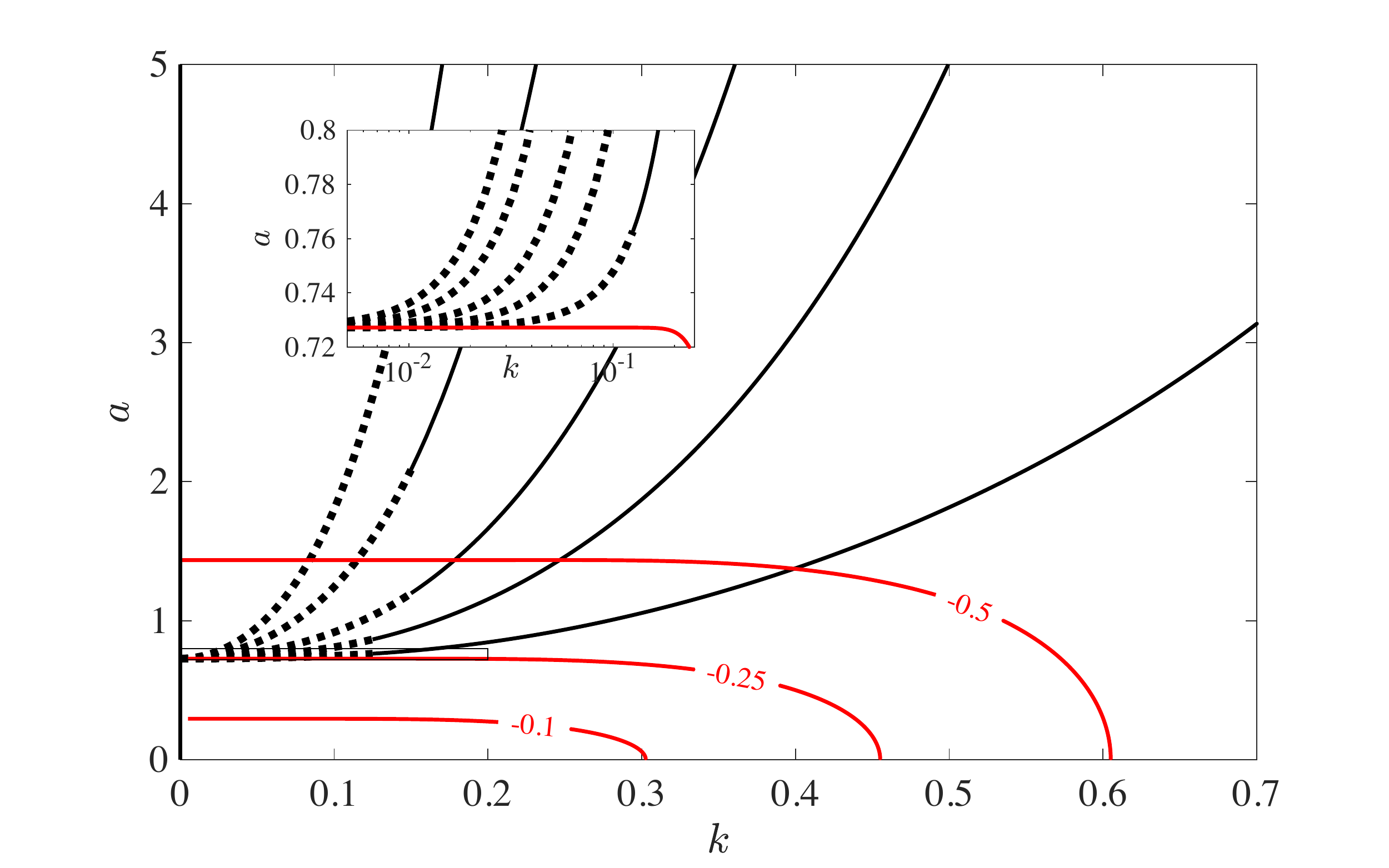}
 \caption{Level curves $H = 0$ of the Hamiltonian for periodic solutions of Eq. \eqref{eq:4th_ord_ode} with $A = 0.$ The   solid black curves are computed,  and the dashed curves extrapolated to  small values of $k.$   The red level curves $\tilde{c}(a,k)=c$  correspond to  periodic solutions with $c =-0.1, -0.25, -0.5$ and $A=0.$ The inset expands the thin rectangular area around $a=0.73,$ with $k\sim 0$ shown on a log scale. }
 \label{fig:hamil_zero_constant}
\end{center}
 \end{figure} 

\section{Computations of heteroclinic connections between periodic orbits}
\label{sec:heteroclinic_computations}

To complete the construction of traveling waves, we identify pairs $\varphi_\pm$ of periodic orbits satisfying the jump conditions of the previous section, and compute a solution $f(\xi)$ of the traveling wave ODE \eqref{eq:profile_eq} as the intersection of stable and unstable manifolds of the periodic orbits. This construction works smoothly providing both orbits are {\em hyperbolic}, meaning that they each have four real Floquet multipliers.  If one of the orbits has two non-real multipliers, as in Figure \ref{fig:floquet_mult_config}(c), we would need to calculate orbits on an invariant torus, leading to a much more complex picture of traveling waves. According to Figure \ref{fig:alpha_m1_floquet}, we should choose values of $a_\pm, k_\pm$ on branches of solutions away from the bifurcation points, avoiding shaded regions in the figure. Comprehensive details of the procedure to compute invariant manifolds of periodic orbits are provided in \cite{koon_dynamical_2011}, where a similar construction is used in the context of a restricted three-body problem. 

Consider the $2\pi$-periodic solution $\varphi_-(\theta)$ of the ODE \eqref{ode_1}, and let $f_-(\xi)=\varphi_-(k\xi).$ Recall from  \S\ref{floquet1} that the  ODE linearized about $f_-$ defines a flow map $\Phi(\xi), 0\leq \xi<2\pi/k.$ Since $\varphi_-$ is assumed to be hyperbolic, $f_-$ has a real Floquet multiplier $\lambda$ with $|\lambda|>1$   and corresponding eigenvector $v_\lambda\in\R^4$ of   the monodromy matrix $M=\Phi(2\pi/k)$, so that $Mv_\lambda=\lambda v_\lambda.$  The two-dimensional unstable manifold $W^u$ of $f_-$ is computed by solving the ODE \eqref{eq:4th_ord_ode} written as a 1st-order system with $f_1=f(\xi):$

\begin{align}\label{eq:4th_ord_system}
\frac{d}{d\xi}\mathbf{f}= 
\begin{bmatrix}
f_1 \\ f_2 \\ f_3 \\ f_4 
\end{bmatrix} ' = \begin{bmatrix}
f_2 \\ f_3 \\ f_4 \\ c f_1 - \frac{1}{2}f_1^2 - \alpha f_3 + A
\end{bmatrix},
\end{align}
with  initial conditions depending on a parameter $\delta,$ with $0 < \delta \ll 1$ 
\begin{align*}
\mathbf{f}(0)=\mathbf{f_-}(0)\pm \delta \mathbf{v}_{\lambda}.
\end{align*}
Here,   $\mathbf{f_-}(0)=(\varphi, k\dot{\varphi},k^2\ddot{\varphi},k^3\dddot{\varphi})(0), \ \dot{ } = d/d\theta.$   This generates trajectories $\mathbf{f}^\pm_0(\xi), \xi\geq 0$ and, by also varying $\delta>0,$ the two dimensional unstable manifold $W^u$.  

Instead of parameterizing the invariant manifold by $\xi$ and $\delta,$ we can instead parameterize by $\xi $ and $\theta.$ This parameterization can be implemented by shifting the initial point on the periodic orbit $\mathbf{f_-}(\xi).$ Let $\theta\in [0,2\pi),$ and consider the initial point $\mathbf{f_-}(\xi_0),$ with $\xi_0=\theta/k.$ Then $M\Phi(\xi_0)v_\lambda=\lambda \Phi(\xi_0)v_\lambda,$ so we solve \eqref{eq:4th_ord_system} with initial condition 
$$
\mathbf{f}(\xi_0)=\mathbf{f_-}(\xi_0)\pm \delta \Phi(\xi_0)\mathbf{v}_{\lambda},
$$
thereby generating a further pair of trajectories $\mathbf{f}^\pm(\xi;\theta), \xi\geq \xi_0.$ In this way, we have generated the two-dimensional unstable manifold $W^u$ of $\mathbf{f_-},$   parameterized by  $\xi, \theta.$ The topology of the surface depends on the sign of the nontrivial multiplier with largest magnitude. If $\lambda > 1$, the orientable unstable manifold is topologically a cylinder.  A computed  example for this case is shown in Figure \ref{fig:manifolds_3d}(a). For $\lambda < -1$, $W^u$ is a M\"obius band, as explained in \cite{aougab_isolas_2019}, and  is therefore nonorientable. A computed example for this case is shown in Figure \ref{fig:manifolds_3d}(b). We will later in this section compute the heteroclinic connection between these two periodic orbits. 

 \begin{figure}[h]
\begin{center}
\includegraphics[scale=0.6]{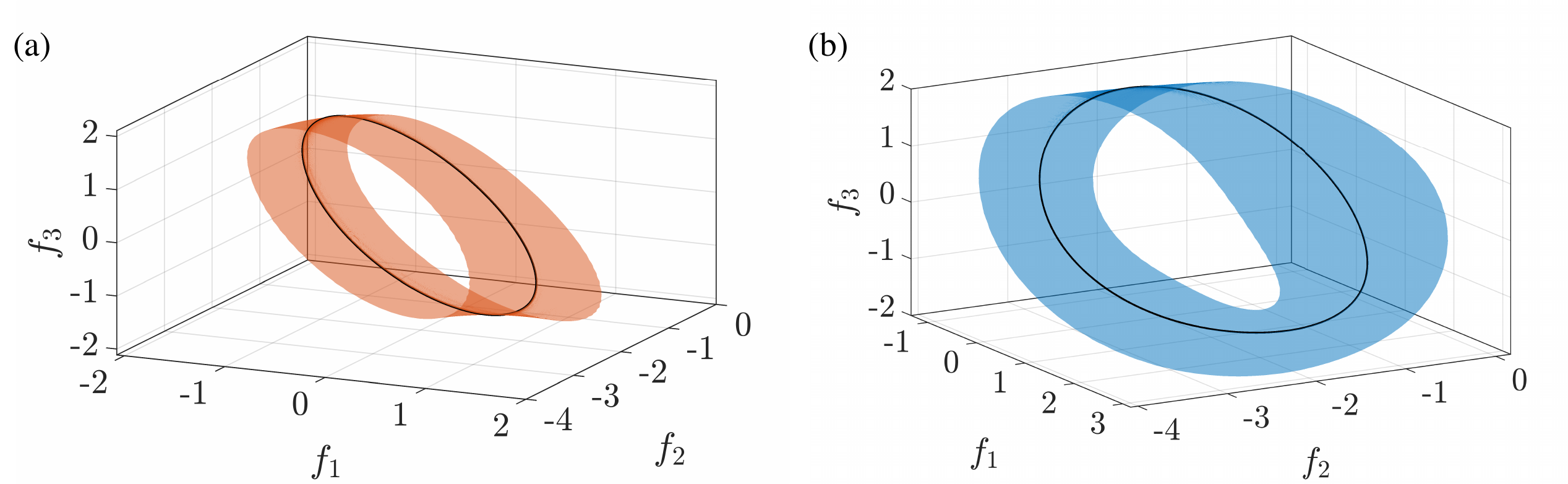}
\caption{($\alpha=+1$) Unstable manifolds for periodic orbits. (a) Orientable stable unstable manifold, $W^{s}$, of the periodic orbit $f_-$ with parameters $(\ubar,a,k) \approx  (0,2,1.2582)$, (b) Nonorientable unstable manifold, $W^{u}$, of the periodic orbit $f_+$ with parameters $(\ubar,a,k) \approx (1,3.2713,0.9736)$. The periodic orbits are the solid black curves.  
}
\label{fig:manifolds_3d}
\end{center}
\end{figure}

The procedure is similar to compute the stable manifold $W^s$ of $\mathbf{f_+},$ for which the initial perturbation is taken in the direction of $\mathbf{v}_{1/\lambda}$. The two dimensional invariant manifolds $W^u, W^s$ reside on the 3-dimensional hypersurface  $H(f_1,f_2,f_3,f_4)= constant,$ defined in \eqref{eq:energy_integral}.

In the 4-dimensional phase space, the level surface of the Hamiltonian \eqref{hamiltonian1} is a 3-d manifold. In this 3-d parameter space, the stable and unstable manifolds  $W^s$ and $W^u$ are two-dimensional, so they generically intersect in a curve. If we consider both manifolds to be parameterized through $(\xi,\theta),$ as described  above,  then different points on the curve  correspond to translations of $\theta,$ and corresponding translations of $\xi.$ 
 
The next step is to define a suitable Poincar\'{e} section ${\cal{P}}: f=constant.$ In figures \ref{example_TW_construction}--\ref{fig:eq_to_per_computations}, we use  $f = \frac{1}{2}\ubar_-$, where $\ubar_-$ is the average of the periodic orbit $\varphi_-.$ The manifold $W^u$  intersects $\cal{P}$ in a curve, and for $\varphi_+$ close enough to $\ubar_-,$ the stable manifold of $\mathbf{f_+}$ will also. Since both manifolds lie in the same 3-dimensional level surface $\cal{H}$ of $H,$ these curves lie in the same two dimensional surface, namely $\cal{H}\cap\cal{P}.$ Providing the curves intersect, the point of intersection represents a trajectory joining $\mathbf{f_-}$ to $\mathbf{f_+}.$  

To reconstruct the connecting orbit, we appeal to the autonomous property of the ODE system to set $\xi=0$ at the intersection point, then trace the trajectory on $W^u$ through negative $\xi$ back to $\mathbf{f_-},$ and the trajectory through positive $\xi$ to $\mathbf{f_+}.$ Since these two trajectories are not exact solutions, the constructed orbit has to be truncated by limiting $\xi$ to a bounded interval.

To compute example traveling wave solutions, we use wave parameters that satisfy the nonlinear jump conditions \eqref{eq:jump_conds_newvars}, thereby identifying the far-field periodic orbits $\varphi_-$ and $\varphi_+$. The corresponding nontrivial Floquet multipliers are calculated and we label those with largest modulus as $\lambda_-$ and $\lambda_+$ respectively. The  far-field wave parameters used in our example computations are summarized in Table \ref{tab:comput_params}, and the traveling wave solutions are shown in Figures \ref{example_TW_construction}--\ref{fig:eq_to_per_computations}. The parameters in the table are chosen to cover the two cases $\alpha=\pm1$ of the Kawahara equation, and for each case,  two examples of traveling waves are shown, one in which the Floquet multipliers of both periodic orbits are positive, and the other in which one orbit has negative nontrivial multipliers.

\begin{table}[h]
{\normalsize
\begin{center}
\begin{tabular}{c| c c c c c c}

Figure & $\alpha$ & $(\ubar_-,a_-,k_-)$ & $(\ubar_+,a_+,k_+)$ & $\lambda_-$ & $\lambda_+$ \\ \hline \\ [-6pt] 
\ref{example_TW_construction} & $+1$ &$(1,3.2713,0.9736)$ & $(0,2,1.2582)$ & $-74.4664$ & $38.7253$ \\ 
\ref{fig:example_TWs}(a) & $+1$ & $ (2,3.8003,0.5382)$ & $(0,2,1.3222)$ & $1.0399 \times 10^3$ & $56.2029$  \\ 
\ref{fig:example_TWs}(b) & $-1$ &  $(0.05,2.5234,0.5382)$ & $(0,2.5,0.5484)$ & $-2.2494 \times 10^4$ & $1.5485 \times 10^4$\\ 
\ref{fig:example_TWs}(c) & $-1$ &  $(0.25,2.1710,0.4057)$ & $(0,2,0.5279)$ & $-7.1178 \times 10^4$ & $1.7536 \times 10^4$\\ 
\ref{fig:eq_to_per_computations}(a)-(c) & $+1$ & $ (2.2041,2.9627,0)$ & $(0,2,1.3236)$ & --- & $56.612$  \\
\ref{fig:eq_to_per_computations}(d) & $-1$ & $ (1.4646,2.8366,0)$ & $(0,2.5,0.6162)$ & --- & $5.6118 \times 10^4$  
\end{tabular}
\end{center}
\caption{Parameters of far-field periodic orbits connected by a heteroclinic orbit. }
\label{tab:comput_params}
}
\end{table}

In Figure \ref{example_TW_construction}, we plot portions of the invariant manifolds for the traveling wave limiting to the periodic wave $\mathbf{f}_-$ with the parameter triple $(\ubar_-,a_-,k_-) \approx (1,3.2713,0.9736)$ as $\xi \to -\infty$ and limiting to the periodic wave $\mathbf{f}_+$ with parameter triple $(\ubar_+,a_+,k_+) \approx (0,2,1.2582)$ as $\xi \to + \infty$. These wave parameters result in the invariant manifolds shown in Fig. \ref{fig:manifolds_3d}. In panel \ref{example_TW_construction}(a), we show the relevant portions of the invariant manifolds with the transparent red and blue surfaces. The colors match those in Fig. \ref{fig:manifolds_3d}, so the unstable manifold of $\mathbf{f}_-$ is shown in blue and the stable manifold of $\mathbf{f}_+$ is shown in red. The corresponding periodic orbits $\mathbf{f}_-$ and $\mathbf{f}_+$ are the dashed blue and red curves respectively. The trajectories along the invariant manifolds are integrated until they intersect transversally on the Poincare section $f_1 = \frac{1}{2}\ubar_-$. Panel \ref{example_TW_construction}(b) is a zoom in near the transverse intersection in the $(f_1,f_2,f_3)$ phase plane. The 3-dimensional figure in panel \ref{example_TW_construction}(c) shows the intersection of the invariant manifolds with the Poincar\'{e} section $f_1= \frac{1}{2}\ubar_-,$ together with the level surface of the Hamiltonian \eqref{hamiltonian1} $H = H_0$ (computed from $f_-$), evaluated at $f_1 = \frac{1}{2}\ubar_-$. This two dimensional surface is given by the equation
$$ 
\frac{\alpha  }{2} f_{2}^2 +f_{2}f_{4}-\frac{1}{2}f_{3}^2=H_0 +\frac{c}{2}\left(\frac{\ubar_-}{2}\right)^2 - \frac{1}{6}\left(\frac{\ubar_-}{2}\right)^3 + A\frac{\ubar_-}{2},
$$
 a conic section shown as a translucent gray surface. The unstable manifold of $\mathbf{f}_-$ is shown as  a blue curve and the stable manifold of $\mathbf{f}_+$ is the red curve, matching the colors in panel \ref{example_TW_construction}(a). From panels \ref{example_TW_construction}(b) and \ref{example_TW_construction}(c), we determine the trajectories along the invariant manifolds that intersect and the reconstructed traveling wave solution is plotted in panel \ref{example_TW_construction}(d). In Fig.~\ref{fig:example_TWs} we plot  traveling wave solutions computed via the same process as for Fig.~\ref{example_TW_construction}, but with parameters (listed in Table \ref{tab:comput_params}), representing different cases.

\begin{figure}[h]
\begin{center}
\includegraphics[scale=0.4]{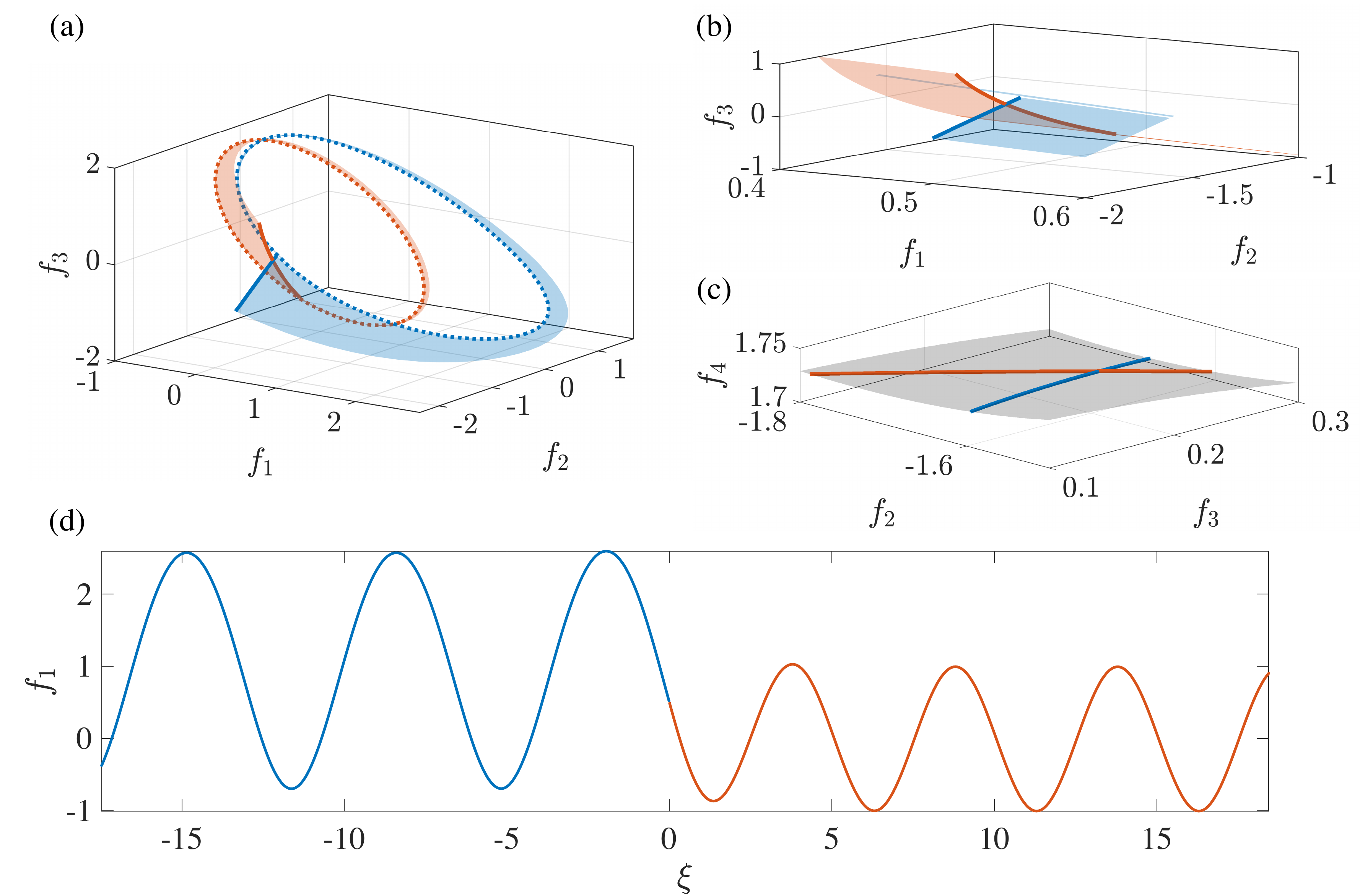}
\caption{Example traveling wave solution of the Kawahara equation \eqref{eq:kawahara} constructed from the intersection of invariant manifolds with $\alpha +1$. (a) Numerical computation of the intersecting invariant manifolds (b) Zoom-in of the invariant manifolds near their intersection on the Poincar\'{e} section $f_1 = \frac{1}{2}\ubar_-$ (c) Transverse intersection of the invariant manifolds (red and blue curves) on the level set of the Hamiltonian (d) Reconstructed traveling wave solution.} 
\label{example_TW_construction}
\end{center}
\end{figure}

\begin{figure}[h]
\begin{center}
\includegraphics[scale=0.4]{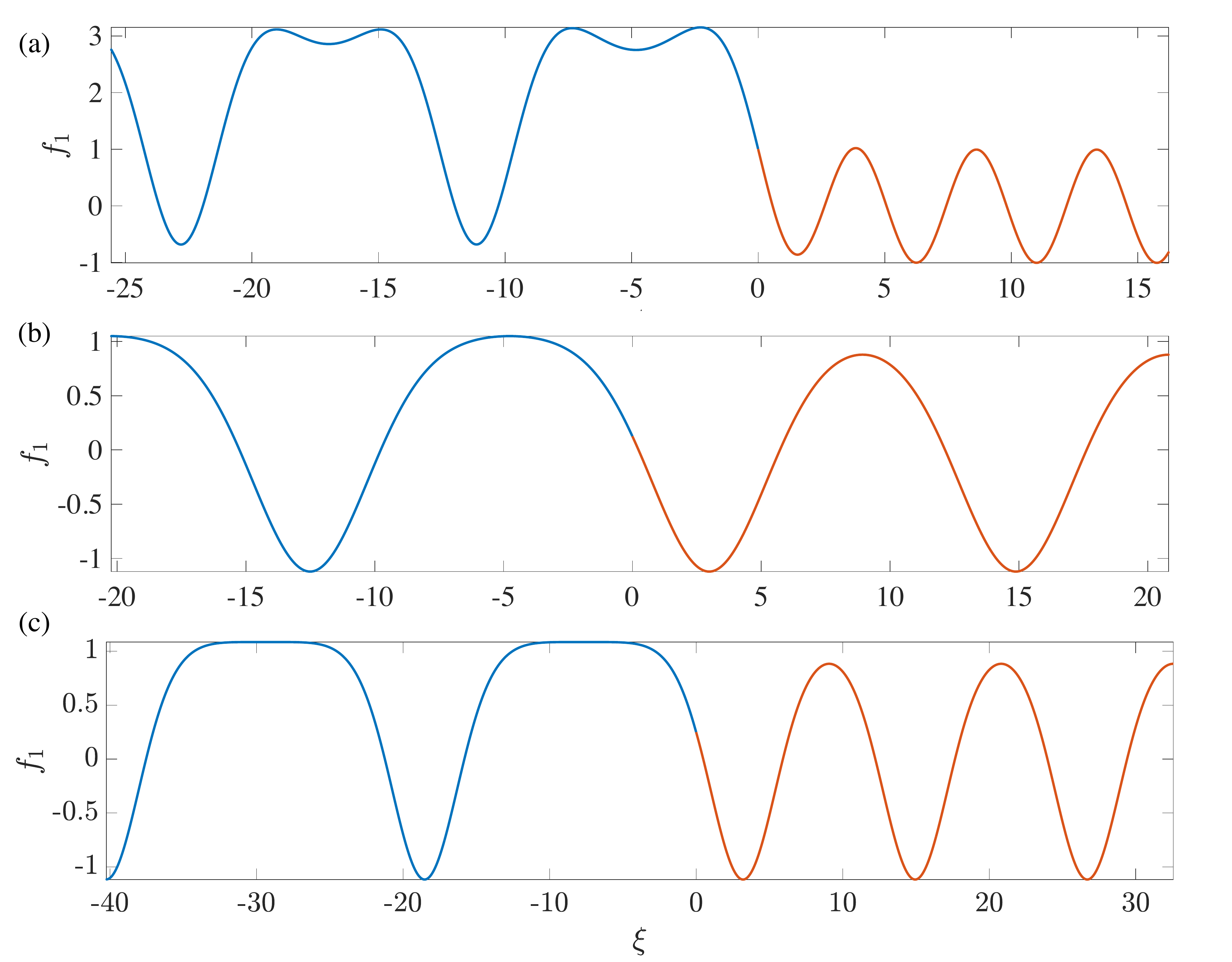}
\caption{Example traveling wave solutions. See Table \ref{tab:comput_params} for the far-field periodic wave parameters. } 
\label{fig:example_TWs}
\end{center}
\end{figure}
   
Examples of traveling waves that connect a solitary wave $f_-$ with constant background  to a periodic wave $f_+$ are shown in Figure \ref{fig:eq_to_per_computations}. In those cases, the traveling wave orbit follows the unstable manifold of the solitary wave with background $f_1^-$ before connecting to the orbit of $\varphi_+.$ The solitary wave is shown in panel~(c) superimposed on the graph of $f(\xi)$ as a dashed line.  The structure of the traveling wave is similar for $\alpha=+1$ and $\alpha=-1,$ but in Figure  \ref{fig:eq_to_per_computations}(d), oscillation within the solitary wave is clearly visible, whereas in Figure \ref{fig:eq_to_per_computations} there are further oscillations that cannot be seen at the scale shown. As mentioned previously in \S \ref{sec:eq_to_p} solutions joining equilibria to periodic waves can only be computed for sufficiently negative velocities, $c - \ubar_s < - 1/4$; the traveling wave shown in Fig. \ref{fig:eq_to_per_computations}(c) and (d) move with the respective velocities $c - \ubar_s \approx -1.49$ and  $c - \ubar_s \approx -0.99$.

\begin{figure}[h]
\begin{center}
\includegraphics[scale=0.4]{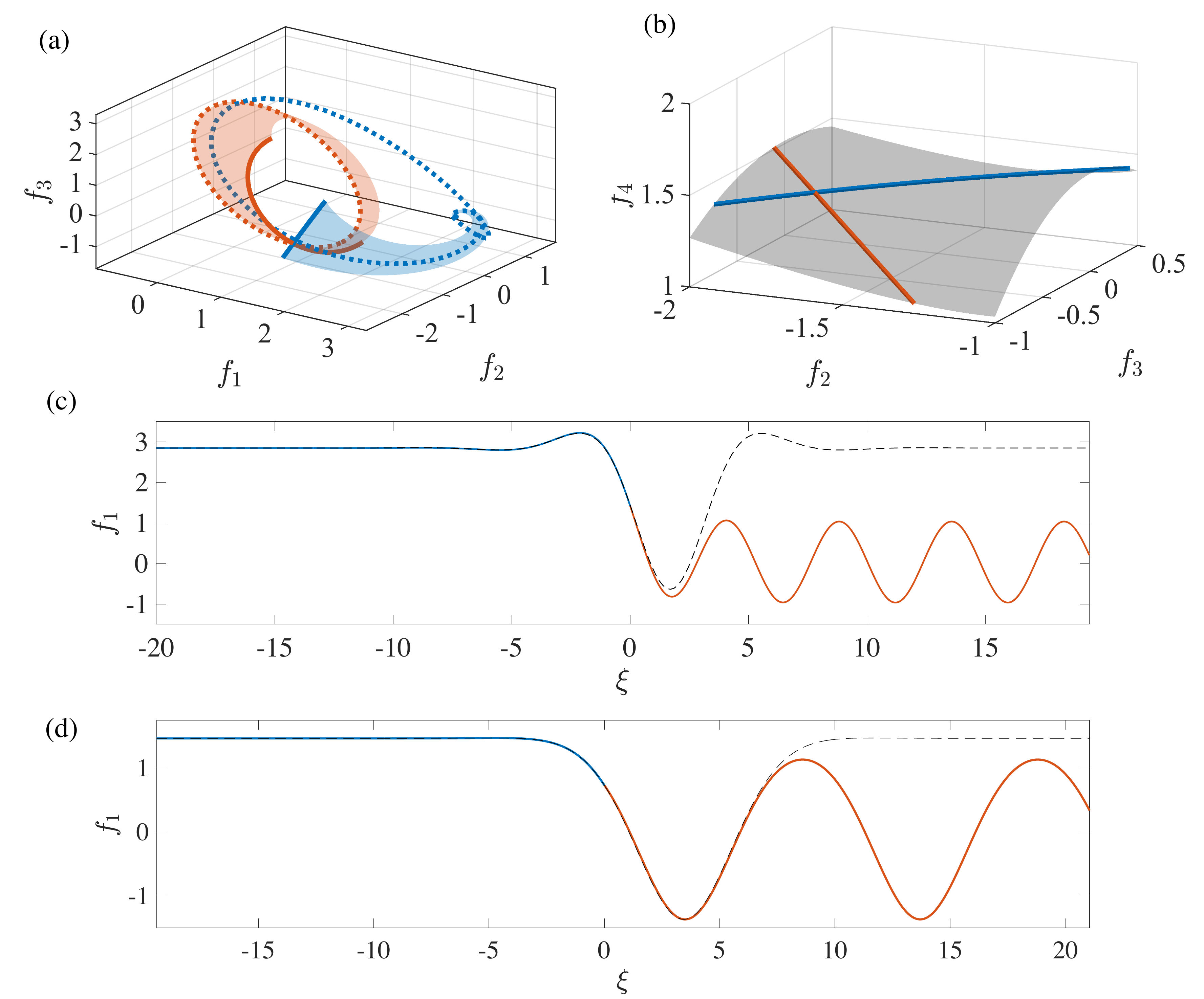}
\caption{Construction of traveling waves joining a constant to a periodic wave, through a solitary wave. Panels (a)-(c) correspond to the same traveling wave with $\alpha = +1$ (a) Numerical computation of the intersecting invariant manifolds (b) Transverse intersection of the invariant manifolds (red and blue curves) on the level set of the Hamiltonian with $f_1 = \frac{1}{2}\ubar_s$ (c) Reconstructed traveling wave solution. (d) representative equilibrium to periodic traveling wave for $\alpha = -1$. }
\label{fig:eq_to_per_computations}
\end{center}
\end{figure}

\section{Discussion and Conclusions}

In this paper, we explore the presence of traveling wave solutions of the Kawahara equation represented by heteroclinic orbits of the associated fifth order ordinary differential equation. Since the ODE can be integrated, the phase space for the orbits is four dimensional. However, the dimension is further reduced due to the Hamiltonian structure of the first order system, so that orbits can be represented using three-dimensional parameterizations of level surfaces of the Hamiltonian. An unusual aspect of the problem is that the traveling waves connect either two periodic waves or a solitary wave (i.e., a homoclinic orbit)  and a periodic wave. We use a combination of theoretical understanding of the structure of the phase portraits, and constructive numerical techniques, to explain the occurrence of the traveling waves. 

Interestingly, the initial goal of the research was to understand the role of  of shock wave solutions of Whitham modulation equations in the context of the fifth order KdV equation. In this special case of the Kawahara equation,  the recent work of Sprenger and Hoefer \cite{sprenger_discontinuous_2020} formulates jump conditions for the shock waves and explores the observation that they are related to the existence of traveling waves between periodic orbits. Here we approach the existence of traveling
waves between periodic orbits directly, without recourse to Whitham theory.
 
Numerical techniques rely on a pseudospectral method, representing the periodic solutions and the ODE by truncated Fourier series. Theoretical tools include the observation of scaling properties of the equations, dynamical systems properties of the Hamiltonian system such as the structure of stable and unstable manifolds of periodic solutions,  the use of a Poincar\'{e} section, formulation of compatibility conditions between the asymptotic states and their analysis with bifurcation  theory and parameter continuation. 

Our investigations depend on an assumption (item 1 below), and lead to a conjecture (item 2) 
\begin{enumerate}

\item The construction of traveling waves between periodic solutions depends on both periodic orbits being hyperbolic, meaning that all four of the Floquet multipliers (of each orbit) are real, and two are distinct. However, we show that if the periodic solutions are sufficiently close in their wave parameters, then one of the orbits has two complex conjugate Floquet multipliers on the unit circle. For this orbit, we are unable to characterize the stable or unstable manifolds, so that the dynamics of such traveling waves between the periodic solutions are not available without further understanding. This is a consequence of the fact that the bifurcation in the jump conditions occurs near a critical point of the Hamiltonian, as a function of the wavenumber.

\item In constructing traveling waves joining a constant (equilibrium) solution of the ODE system to a periodic solution, the compatibility conditions between the constant and the periodic wave are the same as for a solitary wave to be joined to the periodic solution, the solitary wave having the equilibrium as background. In this circumstance we find numerically that the speed of the traveling wave is limited to values for which the equilibrium has complex, non-real eigenvalues. We explain how this scenario is related to well-known properties of solitary waves for the Kawahara equation, and conjecture that the limitation is correct, specifically: if there is a traveling wave joining a constant to a periodic wave, then the wave speed is such that the eigenvalues of the constant are non-real. We reduce the conjecture to testing it on a one-parameter family of solutions, and provide numerical evidence for that family.
\end{enumerate}
 
\section*{Acknowledgements}
The research of PS and MS was supported by National Science Foundation grant  DMS-1812445. 

\bibliographystyle{siam}
\bibliography{references}

\renewcommand{\theequation}{A-\arabic{equation}}
  \setcounter{equation}{0}  
 \section*{Appendix A: Numerical Computation of Periodic Orbits}
In this appendix, we discuss the numerical approximation of periodic solutions via a pseudospectral method similar to that used in \cite{ehrnstrom_traveling_2009}.  We first set the constant of integration $A=0$ in the fourth order ODE \eqref{eq:4th_ord_ode}:
\begin{align}\label{eq:A1}
-c f + \frac{1}{2}f^2 + \alpha f'' + f'''' = 0.
\end{align}
For each wavenumber  $k,$    $2\pi/k$-periodic solutions $f$ of \eqref{eq:A1} are approximated by a truncated Fourier series 
\begin{align}
f \approx F_N = \sum_{n = -{N}}^{N} \hat{f}_n e^{ink\xi}. 
\end{align}
Substituting into \eqref{eq:A1} gives the nonlinear  equation
\begin{align}\label{eq:A3}
 \frac{1}{2}F_N^2 +  \sum_{n = -{N}}^{N} \left(-c + \alpha (nk)^2 + (nk)^4\right)\hat{f}_n e^{i n k \xi} = 0. 
\end{align}
The projection of  this equation onto each Fourier mode $e^{ink\xi}$, $n = -N,\ldots, N$ results in a system of $2N+1$ equations for the $2N+1$   Fourier coefficients $\hat{f}_n,$ with constant  $c,$ treated as a continuation parameter.
 
From  linear theory (linearizing \eqref{eq:A1} about $f=0$), we have infinitesimal $2\pi/k$-periodic solutions  with   phase velocity $c_{\rm p}(k) = - \alpha k^2 + k^4$. Approximate finite amplitude solutions are computed using  Matlab's nonlinear solver fsolve, choosing $N$  large enough, depending on $k,$  to push the residual below  $10^{-12}$. For example, for $k = 1$,  $2^6$ Fourier modes are required, while $2^{12}$ Fourier modes are needed for $k = 0.005$. The solutions $f=\tilde{f}$ are found by continuation from the small amplitude solutions as  $c$ varies away from $c=c_{\rm p}(k).$   This gives  the periodic wave amplitude $a=a(c, k)$ and average $\tilde{u}(c,k)$ as functions of its velocity and wavenumber. Inverting the relation $a=\tilde{a}(c,k))$ for each $k,$ and interpolating  using cubic splines, gives the velocity $c=\tilde{c}(a,k),$ and average $\ubar=\tilde{u}(a,k).$  

In a final step, we can use the Galilean symmetry \eqref{eq:galilean} to shift the mean of the periodic solutions to $\ubar =0,$ thereby modifying the wave velocity to $c(a,k)= \tilde{c}(a,k) - \ubar(a,k).$  In so doing, we obtain the  two-parameter family of periodic solutions used throughout this manuscript. 

\end{document}